\theoremstyle{plain}
\newtheorem{theorem}{Theorem}
\newtheorem{lemma}{Lemma}
\newtheorem{proposition}{Proposition}
\newtheorem{corollary}{Corollary}
\newtheorem*{claim*}{Claim}
\newtheorem*{example*}{Example}
\theoremstyle{definition}
\theoremstyle{remark}
\newtheorem*{remark*}{Remark}
\newcommand{\R}{\mathbb{R}}
\newcommand{\bbP}{\mathbb{P}}
\newsavebox\myboxA
\newsavebox\myboxB
\newlength\mylenA
\newcommand*\xoverline[2][0.75]{%
    \sbox{\myboxA}{$\m@th#2$}%
    \setbox\myboxB\null
    \ht\myboxB=\ht\myboxA%
    \dp\myboxB=\dp\myboxA%
    \wd\myboxB=#1\wd\myboxA
    \sbox\myboxB{$\m@th\overline{\copy\myboxB}$}
    \setlength\mylenA{\the\wd\myboxA}
    \addtolength\mylenA{-\the\wd\myboxB}%
    \ifdim\wd\myboxB<\wd\myboxA%
      \rlap{\hskip 0.5\mylenA\usebox\myboxB}{\usebox\myboxA}%
    \else
        \hskip -0.5\mylenA\rlap{\usebox\myboxA}{\hskip 0.5\mylenA\usebox\myboxB}%
    \fi}
\newcommand{\ra}[1]{\renewcommand{\arraystretch}{#1}}
\begin{document}
\title{Near-Optimal Rapid MPC using Neural Networks: \\ A Primal-Dual Policy Learning Framework}
\author{Xiaojing Zhang$^{\dagger,*}$, Monimoy Bujarbaruah$^{\dagger,*}$ and Francesco Borrelli\thanks{$\dagger$ authors contrinuted equally to this work. The authors are with the Model Predictive Control Laboratory, Department of Mechanical Engineering, University of California at Berkeley, CA 94720, USA (email: xgeorge.zhang@gmail.com, monimoyb@berkeley.edu, fborrelli@berkeley.edu). }
}

\maketitle

\begin{abstract}
In this paper, we propose a novel framework for approximating the explicit MPC policy for linear parameter-varying systems using supervised learning. Our learning scheme guarantees feasibility and near-optimality of the approximated MPC policy with high probability.
Furthermore, in contrast to most existing approaches that only learn the MPC policy, 
we also learn the ``dual policy", which enables us to keep a check on the approximated MPC's optimality online during the control process.
If the check deems the control input from the approximated MPC policy \emph{safe and near-optimal}, then it is applied to the plant, otherwise a backup controller is invoked, thus filtering out (severely) suboptimal control inputs.
The backup controller is only invoked with a bounded (low) probability, where the exact probability level can be chosen by the user.
Since our framework does not require solving any optimization problem during the control process, 
it enables the deployment of MPC on resource-constrained systems. Specifically, we illustrate the utility of the proposed framework on a vehicle dynamics control problem. Compared to online optimization methods, we demonstrate a speedup of up to 62x on a desktop computer and 10x on an automotive-grade electronic control unit, while maintaining a high control performance.
\end{abstract}


\section{Introduction}
Model Predictive Control (MPC) is an advanced control strategy that is able to optimize a system's behavior while respecting system constraints. Originating in process control, MPC has found application in fields such as building climate control \cite{Oldewurtel:energyBuildings:2012, MaMatuskoBorrelli2014,zhang:schildbach:sturzenegger:morari:13}, quadcopter control \cite{BouffardAswani2012, Mueller2013_QuadcMPC, DAndrea_2016ICRA}, self-driving vehicles \cite{GrauGaoLinHedrickBorrelli2013, rosolia2017autonomousrace, bujarbaruahtorque,  ZhangParking_CDC2018} and robotics in general \cite{PoignetGautier2000NMPCManipulator, Wieber2006LMPCWalking,  LevineAbbeel2016learning, 2018arXiv180604335B}. However, implementing MPC on fast dynamical systems with limited computation capacity remains generally challenging, since MPC requires the solution of an optimization problem at each sampling step. This is especially true for mass-produced systems, such as drones and automotive vehicles where low-powered and cheap processors are used.

Over the past decade, significant research effort has been devoted to enabling MPC to systems with limited computation power by developing numerically efficient solvers that exploit the structure of the MPC optimization problem \cite{RaoWrightRawlings1998_IPM_MPC, richter2009real,  MattingleyBoydCVXGen2012, FerreauQPOases2014, osqp}. Another approach to reduce computation load of MPC is to precompute the optimal control law offline, store the control law into the system, and evaluate it during run-time. This approach, known as {Explicit MPC}, is generally well-understood for linear time-invariant system where the optimal control law has been shown to be piecewise affine over polyhedral regions \cite{BorrelliBemporadMorari_book}. The main drawbacks of Explicit MPC is that synthesis of the optimal control law can be computationally demanding even for medium-sized problems, and storing and evaluating the look up tables can be prohibitive for embedded platforms \cite{KvasnicaFikarTAC2012}. To address this issue, {suboptimal} explicit MPC methods have been developed that are defined over fewer polyhedral regions and hence can be evaluated more efficiently, at the cost of reduced performance \cite{JohansenGrancharovaTAC2003, multiresSeanSummers, JonesMorariPolytopicApproximationTAC2010, KvasnicaFikarClippingTAC2012, KvasnicaAutomatica2013}. 

An alternative way of approximating explicit MPC control laws is by means of {function approximation}, where the control law is encoded, for example, through a Deep Neural Network (DNN) \cite{ParisiniNNAutomatica1995, ChenMorariACC2018, chen2019large}, or represented as a linear combination of basis functions \cite{DomahidiLearning2011, ultraFastBemporad}. In general, the goal is to find a function out of a given function class that best explains some given training data, using tools such as supervised learning \cite{ParisiniNNAutomatica1995, DomahidiLearning2011} or reinforcement learning \cite{ChenMorariACC2018}. The main advantage of such \emph{policy learning} methods is two-fold, namely $(i)$ approximators such as DNNs are believed to be able to encode complex functions with relatively few parameters \cite{ChenMorariACC2018, LuciaEfficientRepresentationArXiv2018, HertneckAllgowerLCSS2018}, and $(ii)$ while the training process can be computationally demanding, evaluating the control laws can be often carried out very efficiently, since it only requires evaluation of (simple) functions.
The main challenge with designing control policies with function approximators is that it is often not possible to verify feasibility and optimality of the encoded control laws, without introducing significant conservatism during the learning phase. 
This conservatism can be partially alleviated by resorting to methods that use statistical verification to ensure safety and performance. These methods, however, only provide probabilistic guarantees \cite{ParisiniNNAutomatica1995, HertneckAllgowerLCSS2018}. 
To the best of the authors' knowledge, such methods lack the ability to verify safety and performance deterministically and in real-time. {In other words, given a current state, these methods cannot certify safety and performance of the control input to be applied.} Nevertheless, in practice it is critical to verify a control input before it is applied to a system to ensure safety, performance and stability. 

To address this issue, this paper proposes a novel policy learning scheme, called \emph{primal-dual policy learning}, for approximating the explicit MPC control law for linear \emph{parameter-varying} systems, where we learn a \emph{primal policy} and a \emph{dual policy}. 
The primal policy encodes the approximated explicit MPC law, while the dual policy encodes an optimality certificate, which we use in real-time during the control process to estimate the performance of the control input given by the primal policy.
Specifically, our contributions can be summarized as follows:
\begin{itemize}
    \item We propose a novel supervised learning framework where we formulate our primal-dual policy learning problem as randomized optimization problems, whose constraints are defined by the training data set. Using tools from statistical learning theory and scenario optimization, we show that if the sample size of the training data is chosen appropriately, then the learned primal and dual policies satisfy the constraints and are near-optimal with high probability.
    Explicit sample sizes are provided when the policies are approximated through Deep Neural Networks or through the linear combination of basis functions.

    \item Using arguments from strong duality of convex optimization, we show that the dual policy can be used to estimate the (sub)optimality of the approximated MPC law that is encoded in the primal policy. Since evaluating the dual policy is computationally cheap, it  provides an efficient way of estimating the suboptimality of the approximated MPC law in real-time during the control process. 
    To the best of our knowledge, our framework is the first of its kind that not only provides probabilistic feasibility and optimality guarantees of the approximated explicit MPC policy, but is able to efficiently detect suboptimal control inputs, thus extending the works of \cite{ ParisiniNNAutomatica1995, Lucia_NMPC2018_DNNrobNMPC, HertneckAllgowerLCSS2018}.

    \item 
    Our methodology can be applied to linear parameter-varying systems and, more generally, MPC problems that require the solution of a convex optimization problem, hence extending the  existing literature on policy learning. 
    Furthermore, we show that the proposed framework is applicable to generic policy learning and function approximation schemes, including (deep) neural networks, linear combinations basis functions and kernel regression. 
    
    \item  We demonstrate the efficacy of the proposed primal-dual policy learning framework on an integrated chassis control problem in vehicle dynamics. In particular, we demonstrate computation speedups of up to {62x on a laptop computer, and up to 10x} on an automotive Electronic Control Unit (ECU), enabling the implementation of MPC on such mass-produced embedded systems.
\end{itemize}

The outline of the paper is as follows: Section~\ref{sec:problemDescription} describes the problem setup under consideration. Section~\ref{sec:background} reviews basic concepts of supervised learning, duality theory and randomized optimization. The proposed primal-dual policy scheme is presented and discussed in Section~\ref{sec:primalDualPolicyLearning}. We illustrate our method on an integrated chassis control example in Section~\ref{sec:simResults}. Finally, concluding remarks are presented in Section~\ref{sec:conclusion}.


\section{Problem Description}\label{sec:problemDescription}
\subsection{Dynamics, Constraints, and Control Objective}
We consider linear parameter-varying (LPV) systems of the form
\begin{equation}\label{eq:paramVaryingSystem}
    x_{t+1} = A(q_t)x_t + B(q_t)u_t,
\end{equation}
where $x_t\in\R^{n_x}$ is the state at time $t$, $u_t\in\R^{n_u}$ is the input at time $t$, and $A(q_t)$ and $B(q_t)$ are known matrices of appropriate dimensions, that depend on a time-varying parameter $q_t$. Throughout this paper, we assume that the parameter $q_t$ is known or can be predicted perfectly, for all time steps $t$. The system is subject to linear input and state constraints\footnote{Our framework can be extended to handle generic convex constraints; linearity is assumed here for simplicity.} of the form
\begin{equation}\label{eq:constraints}
    \mathbb U := \{u: H_u u_t \leq h_u\}, \qquad \mathbb X := \{x: H_x x_t \leq h_x\},
\end{equation}
for given $H_u$, $h_u$, $H_x$ and $h_x$.
The control objective is to minimize, over a finite horizon $T$, a quadratic cost of the form
\begin{equation}\label{eq:objFunction}
    x_T^\top Q_T x_T + \sum_{t=0}^{T-1} x_t^\top Q_s x_t + u_t^\top R_s u_t,
\end{equation}
where the matrices $Q_s$, $Q_T$ are assumed to be positive semi-definite, and $R_s$ is assumed positive definite.

\subsection{Model Predictive Control}
At each time step $t$, Model Predictive Control (MPC) measures the state $x_t$, obtains the parameters $[q_{0|t},\ldots,q_{T-1|t}]$, and solves the following finite-horizon optimal control problem
\begin{equation} \label{eq:MPCproblemOrig}
    \begin{array}{llll}
        \displaystyle \min_{U_t, X_t} & x_{T|t}^\top Q_T x_{T|t} + \displaystyle \sum_{k=0}^{T-1} x_{k|t}^\top Q_s x_{k|t} + u_{k|t}^\top R_s u_{k|t}  \\
        \ \ \mathrm{s.t.} & x_{k+1|t} = A(q_{k|t}) x_{k|t} + B(q_{k|t}) u_{k|t}, \\                    & (x_{k|t}, u_{k|t} ) \in \mathbb X \times \mathbb U,\  x_{T|t} \in \mathbb X_f, \\
                    & x_{0|t} = x_t,\ \ k=0,\ldots,T-1,
    \end{array}
\end{equation}
where $x_{k|t}$ is the state at time $t+k$ obtained by applying the predicted inputs $u_{0|t},\ldots,u_{k-1|t}$ to system~\eqref{eq:paramVaryingSystem}. Furthermore, $U_t := [u_{0|t} ,\ldots, u_{T-1|t}]$ and $X_t := [x_{0|t} ,\ldots, x_{T|t}]$ are the collection of all predicted inputs and states, respectively. The set $\mathbb X_f \subset\R^{n_x}$, which we assume is a compact polytope, is a so-called terminal set, and ensures recursive feasibility of the MPC controller, see e.g.\ \cite{BorrelliBemporadMorari_book} for details. If $U^*_t$ is a minimizer of \eqref{eq:MPCproblemOrig}, then MPC applies the first input $u_t = u^*_{0|t}$. This process is repeated at the next time step, resulting in a receding horizon control scheme.

By eliminating the states $X_t$ from \eqref{eq:MPCproblemOrig}, we can express \eqref{eq:MPCproblemOrig} compactly as
\begin{equation}\label{eq:MPC_primal}
    \begin{array}{rlll}
        J^*(P_t) := \displaystyle \min_{U} & \frac{1}{2} U^\top Q(P_t) U + c(P_t)^\top U  \\
        \ \ \mathrm{s.t.} & H(P_t) U \leq h(P_t),
    \end{array}
\end{equation}
where $P_t := [x_t,q_{0|t},\ldots,q_{T-1|t}]$ is the collection of all parameters $\{q_{k|t}\}_k$ and the initial state $x_t$, and $Q(P_t)$, $c(P_t)$, $H(P_t)$, $h(P_t)$ are appropriately defined matrices, see e.g.,~\cite{Goulart2006} for their construction. 
Problem~\eqref{eq:MPC_primal} is a \emph{multi-parametric quadratic program}, whose optimizer $U^*(\cdot)$ and optimal value $J^*(\cdot)$ depend on $P_t$ \cite{BorrelliBemporadMorari_book}. To streamline the upcoming presentation, we assume in this paper that the parameters $P_t$ take values in a known compact set $\mathcal{P}$, i.e., $P_t\in\mathcal{P}$ for all $t$, and that \eqref{eq:MPC_primal} is feasible and finite for all $P_t \in \mathcal{P}$.

\subsection{Overview of Proposed Approach}
Solving the optimization problem~\eqref{eq:MPC_primal} at each sampling time can be computationally challenging for fast systems that run on resource-constrained platforms, even if modern numerical algorithms are used. In these cases, it is often computationally more efficient to precompute and store the optimal control law $U^*(\cdot)$ as a function of $P\in\mathcal{P}$, and then evaluate it in real-time when $P_t$ is known (``explicit MPC") \cite{BorrelliBemporadMorari_book}. Unfortunately, computing the explicit control law $U^*(\cdot)$ to problem~\eqref{eq:MPC_primal} is challenging due to the dependence of the system matrices $A(\cdot)$ and $B(\cdot)$ on the parameters $q_{k}$\footnote{If the system matrices are constant and time-invariant, then $U^*(\cdot)$ is known to be piecewise affine over polyhedral \cite{BemporadMorari1999, BorrelliBemporadMorari_book} and can be computed with existing algorithms \cite{MPT3}.}. 
To address this issue, we propose the use of function approximation to find a policy $\tilde U_\theta(\cdot) \approx U^*(\cdot)$ that approximates the optimal input. 
The approximated policy $\tilde U_\theta(\cdot)$ is computed offline (``learning"), and online, the control input is obtained by evaluating the approximated policy for the specific parameters $P_t$. 
The main challenge in using these tools is to ensure that the approximated policy returns a safe (i.e., constraints are satisfied) and high-performance (i.e., near-optimal) control input. 

In this paper, we propose a novel framework for learning the MPC policy that verifies feasibility and near-optimality in two stages: Offline, during the training phase, we propose the use of a supervised learning scheme, formulated as a randomized optimization problem, for learning the approximated MPC policy $\tilde U_\theta(\cdot) \approx U^*(\cdot)$. We show that, if the number of training data is chosen appropriately, then the approximated control law satisfies the system constraints and is near-optimal with high probability. Online, during the deployment phase, we propose a novel run-time \emph{deterministic} verification scheme, which, for a given parameter $P_t$ at time $t$, verifies safety and estimates the performance of $\tilde U_\theta(P_t)$. This run-time verification scheme is carried out during the control process using an auxiliary \emph{dual policy}.


\section{Technical Background}\label{sec:background}
In this section we briefly review the concept of function approximation (Section~\ref{sec:supLearning}), duality theory (Section~\ref{sec:dualitySection}), and randomized optimization (Section~\ref{sec:randOpt}), which serve as building blocks of our primal-dual policy learning scheme.

\subsection{Supervised Learning}\label{sec:supLearning}
The goal in function approximation is to approximate a function $f\in\mathcal{F}$, defined on some function space $\mathcal{F}$, by another function $\tilde f\in\tilde{\mathcal{F}}\subset \mathcal F$ such that $\|f - \tilde f\|_\mathcal{F}$ is ``small", where $\|\cdot\|_\mathcal{F}$ is some norm. Since this minimization problem is performed over the infinite dimensional space of functions $\tilde{\mathcal F}$, it is generally intractable. A common approach is to restrict oneself to function spaces $\tilde{\mathcal F}=\tilde{\mathcal F}_\theta$ that are characterized by a finite number of parameters $\theta$, and to approximate the norm $\|\cdot\|_{\mathcal F}$ by the empirical error. This is typically achieved by drawing $M$ samples $\{z^{(i)},f(z^{(i)})\}_{i=1}^M$ (``training data"), such that the finite-dimensional problem of \emph{learning $f(\cdot)$} is given by
\begin{equation}\label{eq:Function_Approx}
    \theta^* := \displaystyle \arg\min_\theta\ \ \sum_{i=1}^M \mathcal{L}\left( f(z^{(i)}) - \tilde{f}_\theta(z^{(i)}) \right), 
\end{equation}
where $\theta^*$ denotes the optimal parameter and $\mathcal{L}(\cdot)$ a loss function, such as the euclidean norm. The choice of the loss function and the function space $\tilde{\mathcal{F}}_\theta$ is often problem-dependent. In the following, we describe two popular supervised learning algorithms, and we refer the interested reader to \cite{Smola2004,Genton:2002} for details.

\subsubsection*{Deep Neural Networks}
A deep neural network (DNN) of $L$ layers is a parametric function approximator of the form 
\begin{equation*}
    \tilde f_\theta(z) = \left( \lambda_L \circ \sigma \circ \lambda_{L-1} \circ \ldots \circ \sigma \circ \lambda_1 \right) (z),
\end{equation*}
where $\lambda_i(z) := W_i z + b_i$ are affine functions with parameters $(W_i,b_i)\in\R^{w_i\times w_{i-1}}\times\R^{w_i}$, $\theta := \{(W_i,b_i)\}_{i=1}^L$ is the collection of all parameters, and $w_i$ is the width of the $i$th layer, and is a design parameter. The activation function $\sigma(\cdot)$ is also a design parameter, and popular ones are Rectified Linear Units (ReLU) $\sigma(z) = \max\{0,z\}$, sigmoids $\sigma(z)=1/(1+e^{-z})$, or hypertangent $\sigma(z) = \tanh\{z\}$. Neural networks are popular functions approximators because of their universal function approximation property, i.e., under (mild) technical assumptions, neural networks can represent any continuous function \cite{cybenko1989approximation, lu2017expressive}. 

\subsubsection*{Weighted Sum of Basis Functions}
Another way of obtaining a finite-dimensional parametrization of the function space $\tilde{\mathcal F}_\theta$ is to restrict $\tilde f_\theta(\cdot)$ to be the linear combination of $L$ (potentially nonlinear) basis functions $\{\kappa_i(\cdot)\}_{i=1}^L$, such that $\tilde f_\theta(z) = \sum_{i=1}^L \theta_i \kappa_i(z)$. This approach can be interpreted as a two-layer neural network with $\sigma(\cdot) = [\kappa_1(\cdot),\ldots,\kappa_L(\cdot)]$ and $\lambda_1(\cdot)=\textnormal{id}(\cdot)$ is the identity map. Popular basis functions include (Chebyshev) polynomials bases, Fourier bases and  kernel functions.

\subsection{Duality Theory}\label{sec:dualitySection}
Duality is used in optimization to certify optimality of a given (candidate) solution. Specifically, to every convex optimization problem $p^* := \min_x\{f(x): h(x)\leq 0\}$, we can associate its \emph{dual problem} $d^* := \max_{\lambda}\{g(\lambda): \lambda\geq0\}$, where $g(\lambda) := \min_x\{f(x) - \lambda^\top h(x)\}$. Under appropriate technical assumptions\footnote{These include feasibility, finite optimal value, and constraint qualifications, see \cite{boyd2004convex} for details.}, it can be shown that $p^* = d^*$ (``strong duality"). In this paper, we will make use of the \emph{weak duality} property that, for every primal feasible point $h(x)\leq0$ and every dual feasible point $\lambda\geq0$, it holds
\begin{equation}\label{eq:weakDuality}
    g(\lambda) \leq f(x).
\end{equation}
Relationship \eqref{eq:weakDuality} can be used to bound the suboptimality of a (feasible) candidate solution $\bar x$, since $f(\bar x) - p^* \leq f(\bar x) - g(\lambda)$, for any $\lambda\geq0$. In Section~\ref{sec:primalDualPolicyLearning} of this paper, we will make use of this relationship to estimate the suboptimality of our approximated MPC policy.

\subsubsection*{Dual of \eqref{eq:MPC_primal}}
It is well-known that the dual of a convex quadratic optimization problem is again a convex quadratic optimization problem \cite{boyd2004convex}. Specifically, the dual of \eqref{eq:MPC_primal} takes the form
\begin{equation}\label{eq:MPC_dual}
    \begin{array}{rlll}
        D^*(P_t) := \displaystyle \max_{\lambda_t} & \frac{1}{2} \lambda_t^\top \tilde Q(P_t) \lambda_t + \tilde c(P_t)^\top \lambda_t + \tilde g(P_t)  \\
        \mathrm{s.t.}\ & \lambda_t \geq 0,
    \end{array}
\end{equation}
where $\tilde Q(\cdot)=\tilde Q(\cdot)^\top \preceq 0$, $\tilde c(\cdot)$ and $\tilde{g}(\cdot)$ depend on $P_t$. Similar to \eqref{eq:MPC_primal}, the optimizer of \eqref{eq:MPC_dual} depends on the parameters $P_t$, i.e, $\lambda_t^* = \lambda_t^*(P_t)$. Furthermore, since \eqref{eq:MPC_primal} is convex, it follows from strong duality that $J^*(P_t) = D^*(P_t)$, for all $P_t \in \mathcal{P}$.

\subsection{Randomized Optimization}\label{sec:randOpt}
Given a random variable $\delta\in\Delta$ on a probability space $(\Omega,\mathcal F, \mathbb P)$, consider a generic chance-constrained optimization problem
\begin{equation}\label{eq:StochOpt}
    \begin{array}{rlll}
        \displaystyle \min_{z \in \mathcal Z \subset \R^n} & J(z)  \\
        \ \ \mathrm{s.t.} \quad & \mathbb P[g(z,\delta) \leq 0]\geq1-\epsilon,
    \end{array}
\end{equation}
where $\epsilon\in(0,1)$ is the violation probability, $J\colon\R^n\to\R$ is the cost function, and $g:\R^n \times \Delta \to \R$ is the constraint function that depends on the uncertainty $\delta$. Problem~\eqref{eq:StochOpt} is in general computationally intractable to solve, and hence must be approximated in all but the simplest cases \cite{nemirovski:shapiro:06, Nemirovski2006}. 
One way of approximating \eqref{eq:StochOpt} is by means of randomization, where the chance constraint is replaced with a finite number of sampled constraints, i.e.,
\begin{equation}\label{eq:SampledOpt}
    \begin{array}{rlll}
        \displaystyle \min_{z \in \mathcal Z \subset \R^n} & J(z)  \\
        \ \ \mathrm{s.t.} \quad & g(z,\delta^{(i)}) \leq 0, \quad i=1,\ldots,N,
    \end{array}
\end{equation}
where $N>0$ is the so-called \emph{sample size}, and $\{ \delta^{(1)},...,\delta^{(N)} \}$ are i.i.d.~realizations of $\delta$. 
The fundamental question in randomized algorithms is the appropriate sample size $N$ such that the solution of the sampled program satisfies the chance constraints in \eqref{eq:StochOpt} with high probability. Generally speaking, two distinct paradigms exist for determining the sample size: \emph{scenario optimization} for convex sampled problems \cite{calafiore:campi:06, campi:garatti:08, calafiore:10, calafiore:fagiano:12, schildbach:fagiano:morari:13, ZhangAutomatica2015, grammatico:zhang:margellos:goulart:lygeros:16, Nasir_Care_Weyer_TCST2018}, and \emph{statistical learning learning theory} for non-convex sampled problems with finite {VC-dimension} \cite{anthony:biggs, vidyasagar:97, defarias:vanroy:04, erdogan:iyengar:06, tempo:calafiore:dabbene}. In the following subsections, we briefly review both concepts since they will be useful later on when determining the  training set size:

\subsubsection{Scenario Optimization}\label{ssec:scen_opt}
The theory of scenario optimization deals with problems for which \eqref{eq:SampledOpt} is convex. Specifically, under mild technical assumption\footnote{These typically include convexity and  compactness of the set $\mathcal Z$; convexity of $g(\cdot,\delta)$ and almost-sure feasibility of \eqref{eq:SampledOpt} for any sample size $N$ \cite{campi:garatti:08, calafiore:10}},
it was shown in \cite{campi:garatti:08}[Theorem~1] that the probability (also called ``confidence") of the optimizer $z^*$ of the sampled problem \eqref{eq:SampledOpt} not satisfying the chance constraint $\bbP[g(z^*,\delta)\leq0]\geq1-\epsilon$ is upper bounded by the beta distribution function $B_{n,\epsilon}(N) := \sum_{i=0}^{n-1} {N \choose i}\epsilon^i(1-\epsilon)^{N-i}$, where $n$ is the dimension of the decision variable, i.e., $\bbP^N\left\{ \bbP[g(z^*,\delta)>0] > \epsilon   \right\} \leq \sum_{i=0}^{n-1} {N \choose i}\epsilon^i(1-\epsilon)^{N-i}$.
Given a confidence level $\beta\in(0,1)$,  if the sample size satisfies $N \geq \frac{2}{\epsilon}(n-1 + \log\frac{1}{\beta})$, then $\bbP^N\left\{ \bbP[g(z^*,\delta)>0] > \epsilon   \right\} \leq \beta$ \cite{calafiore:10, alamo:tempo:luque:ramirez:Automatica15}. This formula shows that the confidence $\beta$ enters the sample size logarithmically, and can hence be chosen very small ($10^{-5}\sim 10^{-8}$) in practice.
We close by remarking that the linear dependency of $N$ on $n$ can be improved by taking into account additional problem structure, see \cite{schildbach:fagiano:morari:13, ZhangAutomatica2015, Zhang_PhD2016} for details.

\subsubsection{Statistical Learning Theory}\label{ssec:sst_bounds}
The sample size results in scenario optimization only apply to convex sampled problems. For non-convex sampled problems, one generally has to resort  to the theory of \emph{statistical learning} \cite{vapnik, vidyasagar:01, tempo:calafiore:dabbene},
where the sample sizes depend on a quantity called \emph{Vapnik-Chervonenkis} (VC) dimension $\xi\in\mathbb{N}$, rather than the number of decision variables. 
Roughly speaking, the VC-dimension captures the richness of the family of functions $\{\delta\mapsto g(x,\delta): z\in\mathcal{Z}\}$, see \cite{tempo:calafiore:dabbene}[Def.~10.2] for details. 
Assuming that $\xi$ is finite and given a desired confidence level $0<\beta \ll 1$, it has been shown in \cite{anthony:biggs}[Theorem~8.4.1] that, 
if $N \geq \frac{4}{\epsilon} \left( \xi \ln\frac{12}{\epsilon} + \log\frac{2}{\beta}    \right)$, then any feasible solution of the sampled problem \eqref{eq:SampledOpt} satisfies the chance constraint with confidence at least $1-\beta$.
Notice that, for a fixed confidence $\beta$, the sample size in statistical learning theory scales as $\mathcal{O}(\frac{\xi}{\epsilon}\log\frac{1}{\epsilon})$, compared to $\mathcal{O}(\frac{n}{\epsilon})$ for scenario programs.

\section{Primal-Dual Policy Learning}\label{sec:primalDualPolicyLearning}
In this section, we present our \emph{primal-dual policy learning} framework, where we learn two policies, the MPC policy $\tilde{U}_{\theta_p}(\cdot) \approx U^*(\cdot)$ (``primal policy"), and an auxiliary policy $\tilde\lambda_{\theta_d}(\cdot) \approx \lambda^*(\cdot)$ (``dual policy"). We show that $(i)$ probabilistic feasibility and optimality guarantees can be obtained offline by appropriately choosing the size of the training set, and $(ii)$ that the dual policy can be used online to efficiently estimate the suboptimality of the approximated MPC policy $\tilde{U}_{\theta_p}(\cdot)$.

\subsection{Primal and Dual Learning Problems}
We use supervised learning to approximate the primal policy $U^*(\cdot)$, which is the (parametric) minimizer of \eqref{eq:MPC_primal}, and dual policy $\lambda^*(\cdot)$, which is the (parametric) minimizer of \eqref{eq:MPC_dual}. To this end, we generate $N_p$ primal training data $\{P^{(i)}  , J^*(P^{(i)})  \}_{i=1}^{N_p}$, and $N_d$ dual training data $\{P^{(i)}, D^*(P^{(i)})\}_{i=1}^{N_d}$, where $P^{(i)}\in\mathcal{P}$ are samples randomly extracted according to a user-chosen distribution $\mathbb P$, and $J^*(P^{(i)})$ and $D^*(P^{(i)})$ are the optimal values obtained by solving \eqref{eq:MPC_primal} and \eqref{eq:MPC_dual}, respectively\footnote{Due to strong duality, $J^*(P) = D^*(P)$ for any parameter $P\in\mathcal{P}$}. The choice of the distribution $\mathbb P$, according to which the samples are extracted, depends on the task\footnote{For example, one could bias the distribution around a nominal operating point. If no such operating point is known, then a uniform distribution over the parameter space $\mathcal{P}$ is a natural choice.}.
To simplify forthcoming discussion, we define
\begin{subequations}
\begin{align}
     p(P;U) &:= \textstyle \frac{1}{2}U^\top Q(P)U + c(P)^\top U \label{eq:primalObj}\\
     d(P;\lambda) &:= \frac{1}{2} \lambda^\top \tilde Q(P)\lambda + \tilde{c}(P)^\top\lambda + \tilde g(P) \label{eq:dualObj},
\end{align}
\end{subequations}
where \eqref{eq:primalObj} is the objective function of the primal problem \eqref{eq:MPC_primal}, and \eqref{eq:dualObj} is the objective function of the dual problem \eqref{eq:MPC_dual}.
The \emph{primal learning problem} is now given by
\begin{subequations}
\begin{equation}\label{eq:primalLearning}
    \begin{array}{clll}
        \hspace{0.5cm} \displaystyle\min_{\theta_p, t_p} & t_p   \\
       \hspace{0.5cm} \ \ \text{s.t.} & 
       p(P^{(i)}; \tilde U_{\theta_p}(P^{(i)})) - J^*(P^{(i)}) \leq t_p , \\
       & H(P^{(i)})\tilde U_{\theta_p}(P^{(i)}) \leq h(P^{(i)}), \\     
       & \textnormal{for all } i = 1, \ldots, N_p,
    \end{array}
\end{equation}
and the \emph{dual learning problem} is given by
\begin{equation}\label{eq:dualLearning}
    \begin{array}{clll}
        \hspace{0.5cm} \displaystyle\min_{\theta_d, t_d} & t_d   \\
         \hspace{0.5cm}\ \ \text{s.t.} & 
        J^*(P^{(i)}) - d(P^{(i)}; \tilde\lambda_{\theta_d}(P^{(i)})) \leq  t_d, \\
        & \tilde \lambda_{\theta_d}(P^{(i)}) \geq 0, \\
        & \textnormal{for all } i = 1, \ldots, N_d,
    \end{array}
\end{equation}
\end{subequations}
where $\tilde U_{\theta_p}(\cdot)$ and $\tilde\lambda_{\theta_d}(\cdot)$ are user-defined functions with parameters $\theta_p$ and $\theta_d$, respectively (see Section~\ref{sec:aPrioriGuarantees} below).
Intuitively, \eqref{eq:primalLearning} and \eqref{eq:dualLearning} find the primal and dual policy that, with respect to the training data $\{P^{(1)},\ldots,P^{(N_{p/d})}\}$, minimize the worst-case  suboptimality $t_p$ and $t_d$, while ensuring feasibility of the policies. The above learning problems differ from most supervised learning schemes that learn $\tilde U_{\theta_p}(\cdot)$ by simply minimizing the average error $\min_{\theta_p} \sum_{i=1}^{N_p} \|  \tilde U_{\theta_p}(P^{(i)}) - U^*(P^{(i)}) \|^2$, and not enforcing feasibility and minimizing sub-optimality of the approximated policy.

Let $(\theta_p^*,t_p^*)$ and $(\theta_d^*,t_d^*)$ be the minimizers of \eqref{eq:primalLearning} and \eqref{eq:dualLearning}, respectively. In the remainder of this paper, we will refer to $\tilde U_{\theta_p^*}(\cdot)$ and  $\tilde \lambda_{\theta_d^*}(\cdot)$ as the \emph{approximated primal policy} and \emph{approximated dual policy}, respectively, and $t_p^*$ and $t_d^*$ as the primal and dual suboptimality estimates, respectively. 
Since the optimizers $(\theta_p^*,t_p^*)$ and $(\theta_d^*,t_d^*)$ depend on the realization of the training data, the approximated policies $\tilde U_{\theta_p^*}(\cdot)$ and  $\tilde \lambda_{\theta_d^*}(\cdot)$ are, strictly speaking,  random variables. 
In the following section, we provide  feasibility and optimality estimates of the approximated policies $\tilde U_{\theta_p^*}(\cdot)$ and $\tilde \lambda_{\theta_d^*}(\cdot)$.

\subsection{Probabilistic Safety and Performance Guarantees}\label{sec:aPrioriGuarantees}
In this section we show that by choosing the sample sizes appropriately in \eqref{eq:primalLearning}--\eqref{eq:dualLearning}, it is possible to provide \emph{probabilistic} feasibility and optimality guarantees of the approximated policies. Specifically, we will provide sample sizes for policies that are represented as $(i)$ a weighted sum of basis functions (Section~\ref{sec:SampleSizeBasisFunctions}), and $(ii)$ a (deep) neural network with rectified linear units (Section~\ref{sec:SampleSizesDNN}).

\subsubsection{Weighted Sum of Basis Functions}\label{sec:SampleSizeBasisFunctions}
We first consider the case when $\tilde U_{\theta_p^*}(\cdot)$ and $\tilde \lambda_{\theta_p^*}(\cdot)$ are parametrized as 
\begin{subequations}
\begin{align}
     \textstyle \tilde U_{\theta_p}(P) = \sum_{i=1}^{L_p} \theta_{p,i} \kappa_{p,i}(P) \label{eq:primalPolicyParam_linear}\\
     \textstyle \tilde \lambda_{\theta_d}(P) = \sum_{i=1}^{L_d} \theta_{d,i} \kappa_{d,i}(P) \label{eq:dualPolicyParam_linear},
\end{align}
\end{subequations}
where $\{\kappa_{p,i}(\cdot)\}_{i=1}^{L_p}$ are basis functions for the primal policy and $\{\kappa_{d,i}(\cdot)\}_{i=1}^{L_d}$ are the basis functions for the dual policy. The following proposition characterized sample sizes $N_p$ and $N_d$ such that the approximated policies are feasible and at most $t_{p/d}^*$ suboptimal with high probability.
\begin{proposition}\label{prop:scenarioOptSampleSize}
    Let $\epsilon_p,\epsilon_d\in(0,1)$ be admissible primal and dual violation probabilities, let  $0<\beta_p,\beta_d\ll1$ be desired primal and dual confidence levels, and assume that the primal and dual policies are parametrized as in \eqref{eq:primalPolicyParam_linear} and \eqref{eq:dualPolicyParam_linear}, respectively. If the primal and dual sample sizes satisfy $N_{p} \geq \frac{2}{\epsilon_p}(L_p + \log\frac{1}{\beta_p})$ and  $N_{d} \geq \frac{2}{\epsilon_d}(L_d + \log\frac{1}{\beta_d})$, then 
    $$\mathbb{P}\big[ H(P) \tilde U_{\theta_p^*}(P)\leq h(P), p(P; \tilde U_{\theta_p^*}(P)) \leq J^*(P) + t_p^* \big] \geq 1-\epsilon_p$$
    holds with confidence at least $1-\beta_p$, and  $$\mathbb{P}\big[ \tilde \lambda_{\theta_d^*}(P)\geq0,  d(P; \tilde \lambda_{\theta_d^*}(P)) \geq J^*(P)-t_d^* \big] \geq 1-\epsilon_d$$ holds with with confidence at least $1-\beta_d$, where $(\theta_p^*,t_p^*)$ and $(\theta_d^*, t_d^*)$ are the optimal solutions of \eqref{eq:primalLearning} and \eqref{eq:dualLearning}, respectively.
\end{proposition}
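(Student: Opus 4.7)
My plan is to cast both \eqref{eq:primalLearning} and \eqref{eq:dualLearning} as \emph{convex} scenario programs in their respective decision variables and then invoke the sample-size bound from Section~\ref{ssec:scen_opt}.

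First, I would verify convexity of the primal learning problem. With the parametrization \eqref{eq:primalPolicyParam_linear}, the candidate input $\tilde U_{\theta_p}(P)$ is affine in $\theta_p\in\R^{L_p}$ for every fixed $P$. Consequently, the feasibility constraint $H(P^{(i)})\tilde U_{\theta_p}(P^{(i)}) \leq h(P^{(i)})$ is linear in $\theta_p$. Because the MPC objective $p(P;U) = \tfrac{1}{2}U^\top Q(P) U + c(P)^\top U$ is convex quadratic in $U$ (as $Q(P) \succeq 0$, inherited from $R_s \succ 0$ and $Q_s,Q_T \succeq 0$), its composition with an affine map of $\theta_p$ is convex in $\theta_p$, so the constraint $p(P^{(i)};\tilde U_{\theta_p}(P^{(i)})) - J^*(P^{(i)}) \leq t_p$ is convex in $(\theta_p,t_p)$. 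The objective $t_p$ is linear. An analogous argument applies to \eqref{eq:dualLearning}: the nonnegativity constraint $\tilde \lambda_{\theta_d}(P^{(i)})\geq 0$ is linear in $\theta_d$, and since $d(P;\lambda)$ is \emph{concave} in $\lambda$ (because $\tilde Q(P)\preceq 0$), the constraint $J^*(P^{(i)}) - d(P^{(i)};\tilde\lambda_{\theta_d}(P^{(i)}))\leq t_d$ is convex in $(\theta_d,t_d)$.

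Second, I would bundle the two requirements of each learning problem into a single joint constraint. For the primal problem, define $g_p((\theta_p,t_p),P)$ to be the maximum of the (component-wise) slacks $H(P)\tilde U_{\theta_p}(P)-h(P)$ and $p(P;\tilde U_{\theta_p}(P)) - J^*(P) - t_p$; this is a convex function of the $n_p := L_p + 1$ decision variables. The sampled problem obtained by enforcing $g_p((\theta_p,t_p),P^{(i)})\leq 0$ at each of the $N_p$ i.i.d.\ samples is precisely \eqref{eq:primalLearning}. Applying the scenario-optimization bound with $n_p-1 = L_p$ and the stated $N_p$ yields $\mathbb{P}\{g_p((\theta_p^*,t_p^*),P) > 0\} \leq \epsilon_p$ with confidence at least $1-\beta_p$, which is exactly the joint probability statement of the proposition. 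The dual statement follows identically with $n_d:=L_d+1$ and the analogously constructed $g_d$.

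The main subtlety is essentially bookkeeping: one must count the decision variables correctly as $L_p+1$ and $L_d+1$ (including the scalar slacks $t_p,t_d$), so that the scenario bound's $n-1$ term reduces exactly to $L_p$ and $L_d$. One should also briefly check the background hypotheses of scenario theory, namely compactness of the feasible set (which can be ensured by restricting $(\theta,t)$ to a sufficiently large bounded set without loss of generality), almost-sure feasibility (which holds whenever the chosen basis class contains any admissible approximation, e.g.\ a constant policy plus a large slack), and uniqueness of the minimizer (enforced by a standard lexicographic tie-break rule). With these in place, the proof reduces to a direct invocation of the scenario sample-size formula recalled in Section~\ref{ssec:scen_opt}.
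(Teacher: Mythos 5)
Your proof is correct and follows essentially the same route as the paper: count the decision variables of \eqref{eq:primalLearning} and \eqref{eq:dualLearning} as $L_p+1$ and $L_d+1$ (policy parameters plus the scalar slack) and invoke the scenario-optimization sample-size bound $N \geq \tfrac{2}{\epsilon}(n-1+\log\tfrac{1}{\beta})$ from Section~\ref{ssec:scen_opt}. You are in fact more careful than the paper's one-line proof, which silently assumes the convexity of the sampled programs and the regularity conditions (compactness, almost-sure feasibility, uniqueness) that you explicitly verify.
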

\begin{proof}
    The number of free variables in parametrization \eqref{eq:primalPolicyParam_linear} and \eqref{eq:dualPolicyParam_linear} is $L_p$ and $L_d$ respectively. Since $t_p\in\R$ and $t_d \in \R$, the total number of decision variables in \eqref{eq:primalLearning} and \eqref{eq:dualLearning} becomes $L_p+1$ and $L_d+1$ respectively. Using this in the bounds provided in Section~\ref{ssec:scen_opt} completes the proof.
\end{proof}
Intuitively, Proposition~\ref{prop:scenarioOptSampleSize} ensures that the primal (dual) policy is feasible and at most $t^*_p$ ($t^*_d$) sub-optimal  with probability at least $1-\epsilon_p$ ($1-\epsilon_d$). A forteriori,  Proposition~\ref{prop:scenarioOptSampleSize} estimates the performance (measured in terms of objective value) and safety (measured in terms of constraint satisfaction) of the approximated controller $\tilde{U}_{\theta_p}(\cdot)$.

\subsubsection{Deep Neural Networks}\label{sec:SampleSizesDNN}
Let us now consider the case when $\tilde U_{\theta_p^*}(\cdot)$ and $\tilde \lambda_{\theta_p^*}(\cdot)$ are parametrized through a (deep) neural network
\begin{subequations}
\begin{align}
     \textstyle \tilde U_{\theta_p}(P) =   \left( \lambda_{p,L_p} \circ \sigma \circ \lambda_{p,L_p-1} \circ \ldots \circ \sigma \circ \lambda_{p,1} \right) (P),
 \label{eq:primalPolicyParam_DNN}\\
     \textstyle \tilde \lambda_{\theta_d}(P) = \left( \lambda_{d,L_d} \circ \sigma \circ \lambda_{d,L_d-1} \circ \ldots \circ \sigma \circ \lambda_{d,1} \right) (P), \label{eq:dualPolicyParam_DNN}
\end{align}
\end{subequations}
where $\{\lambda_{p,i}(\cdot), \lambda_{d,i}(\cdot)\}_i$ are affine functions, $L_{p/d}$ are the depths, see also Section~\ref{sec:supLearning}. In this paper, we assume the use of Rectified-Linear Units (ReLU) \cite{Goodfellow-et-al-2016} activation functions $\sigma(\cdot)$\footnote{Our results can be extended to other activation functions as well, provided an estimate on the VC-dimension is known.}. 
We now have the following auxiliary lemma:
\begin{lemma}[\cite{bartlett2019nearly}]\label{lem:VCbound}
Assume that a neural network with ReLU activation units consists of $L$ layers, and that each layer has $n_\textnormal{act}^i$ activation units, and let $W$ be the total number of parameters. Then, the VC-dimension of such a neural network is upper bounded by
\begin{align*}
    \xi \leq\ & \bar\xi(L,W) \\[-1em]
    & := L + L W \log_2(4e \sum_{i=1}^{L } i n_{\textnormal{act}}^i \log_2(\sum_{i=1}^{L} 2e i n_{\textnormal{act}}^i)).
\end{align*}

\end{lemma}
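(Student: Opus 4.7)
The plan is to follow the argument of Bartlett, Harvey, Liaw, and Mehrabian, which delivers (near-)tight VC-dimension bounds for piecewise-polynomial parametric classifiers and, in particular, for ReLU networks. The starting point is the standard reformulation: fixing any $N$ input points, one views the network's binary output (say, the sign of a designated coordinate of the last layer) as a function on the parameter space $\R^W$. Shattering $N$ points requires at least $2^N$ distinct labellings to be realizable by varying the parameters, so any upper bound on the number of realizable labellings will upper-bound the VC-dimension through the usual Sauer-type inversion.

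The core step is to partition $\R^W$ into cells on which every ReLU unit has a fixed activation pattern. Within each such cell the whole network acts as a composition of affine maps, so the output, viewed as a function of the parameters, becomes a polynomial whose degree is controlled by the depth at which it is evaluated. I would execute this partitioning layer by layer: after processing layer $i$, each cell of the current partition is carved out by sign conditions on polynomials in the parameters of degree at most $i$, and the $n_{\textnormal{act}}^i$ neurons of layer $i$ introduce, across the $N$ sample points, up to $N n_{\textnormal{act}}^i$ additional sign conditions of the same degree. Applying the Warren--Milnor bound on the number of sign patterns of real polynomials at each stage and multiplying through the $L$ layers produces a region count of the form $\prod_{i=1}^{L}(C\, i\, n_{\textnormal{act}}^i N/W)^{W}$ for an absolute constant $C$; the accumulation of degrees across layers is precisely what gives rise to the $\sum_{i=1}^{L} i\, n_{\textnormal{act}}^i$ factor inside the logarithm.

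Within each cell of the final partition, the output is a single polynomial of degree at most $L$ in the parameters, so the number of labellings realizable from that cell is bounded a second time by the Warren--Milnor bound. Multiplying the number of cells by the per-cell labelling count and requiring the product to be at least $2^N$ yields, after solving for $N$, the announced bound $N \leq L + LW \log_2(4e\sum_{i=1}^{L} i\, n_{\textnormal{act}}^i\log_2(\sum_{i=1}^{L} 2e\, i\, n_{\textnormal{act}}^i))$. The main obstacle is purely bookkeeping: tracking the precise growth of polynomial degrees and of accumulated sign conditions across layers so that one recovers the stated form of the bound, rather than a loose $\widetilde{\mathcal O}(LW\log W)$ estimate; matching the constants $4e$ and $2e$ requires a careful application of the Warren--Milnor estimate in the form used by \cite{bartlett2019nearly}. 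Since that reference already carries out this delicate accounting, in a self-contained write-up I would cite their theorem directly rather than reprove it.
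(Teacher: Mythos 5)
The paper offers no proof of this lemma at all: it is imported verbatim from \cite{bartlett2019nearly}, and your write-up likewise concludes by deferring to that reference, so the two take essentially the same route. Your sketch of the underlying argument (layer-by-layer cell decomposition of parameter space by activation pattern, Warren--Milnor sign-pattern counting with the accumulated degrees producing the $\sum_i i\,n_{\textnormal{act}}^i$ factor, and inversion of the resulting growth bound) is a faithful outline of the cited proof, with only the minor caveat that the per-layer exponent should be the cumulative parameter count $W_i$ rather than the total $W$ — a distinction absorbed by the $LW$ in the stated bound.
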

We point the interested reader to \cite[Theorem~7]{bartlett2019nearly} for a more refined bound.

\begin{proposition}\label{prop:DNNSampleSize}
    Let $\epsilon_p\in(0,1)$ and $\epsilon_d\in(0,1)$ be admissible primal (dual) violation probabilities, let  $0<\beta_p\ll1$ and $0<\beta_d\ll1$ be desired primal and dual confidence levels. Assume that the primal and dual policies are parametrized as in \eqref{eq:primalPolicyParam_DNN} and \eqref{eq:dualPolicyParam_DNN}, respectively, with ReLU activation functions, and let 
    $W_{p/d}$ denote the total number of parameters $\theta_{p/d}$ in the primal/dual neural network.
    If the primal and dual sample sizes are chosen such that $N_p \geq \frac{4}{\epsilon_p} \left( \bar{\xi}(L_p,W_p) \ln\frac{12}{\epsilon_p} + \log\frac{2}{\beta_p}    \right)$ and  $N_d \geq \frac{4}{\epsilon_d} \left( \bar{\xi}(L_d,W_d) \ln\frac{12}{\epsilon_d} + \log\frac{2}{\beta_d}    \right)$, where $\hat{\xi}(\cdot,\cdot)$ is defined as in Lemma~\ref{lem:VCbound}, then
    $$\mathbb{P}\big[ H(P) \tilde U_{\theta_p^*}(P)\leq h(P), p(P; \tilde U_{\theta_p^*}(P)) \leq J^*(P) + t_p^* \big] \geq 1-\epsilon_p$$
    is satisfied with confidence at least $1-\beta_p$, and  $$\mathbb{P}\big[ \tilde \lambda_{\theta_d^*}(P)\geq0,  d(P; \tilde \lambda_{\theta_d^*}(P)) \geq J^*(P)-t_d^* \big] \geq 1-\epsilon_d$$ holds with with confidence at least $1-\beta_d$. 
\end{proposition}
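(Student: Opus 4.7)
The approach parallels the proof of Proposition~\ref{prop:scenarioOptSampleSize}, but since the primal and dual policies are now parametrized as ReLU DNNs, the induced constraint functions $g(\theta_{p/d}, t_{p/d}; P)$ in the sampled programs \eqref{eq:primalLearning}--\eqref{eq:dualLearning} are no longer convex in the decision variables. Hence the scenario-optimization bound of Section~\ref{ssec:scen_opt} does not apply and I will instead invoke the statistical-learning bound reviewed in Section~\ref{ssec:sst_bounds}, namely that $N \geq \frac{4}{\epsilon}\bigl(\xi\ln\frac{12}{\epsilon}+\log\frac{2}{\beta}\bigr)$ samples are sufficient for any feasible point of the sampled problem to satisfy the chance constraint with confidence at least $1-\beta$, where $\xi$ is the VC-dimension of the family of constraint functions indexed by the decision variable.

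The main step is to identify $\xi$ for our two problems. Consider first the primal: fixing the decision variables $(\theta_p,t_p)$, the constraint indicator is
\[
P \ \mapsto\ \mathbf{1}\bigl[H(P)\tilde U_{\theta_p}(P) > h(P)\bigr]\ \vee\ \mathbf{1}\bigl[p(P;\tilde U_{\theta_p}(P)) - J^*(P) > t_p\bigr],
\]
i.e.\ a finite Boolean combination of inequalities whose nonlinear dependence on $(\theta_p,t_p)$ enters only through the ReLU network $\tilde U_{\theta_p}(\cdot)$ of depth $L_p$ with $W_p$ parameters. Standard arguments on VC-dimensions of Boolean combinations and of compositions with fixed (in this case, affine in $U$ and quadratic in $U$) outer maps show that the VC-dimension of this constraint family is, up to an absolute multiplicative constant, no larger than that of the underlying ReLU network augmented by the scalar slack $t_p$. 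Invoking Lemma~\ref{lem:VCbound} then yields $\xi \leq \bar\xi(L_p,W_p)$; the extra scalar $t_p$ contributes at most an $O(1)$ term absorbed into the bound. The argument for the dual constraints $\tilde\lambda_{\theta_d}(P)\geq 0$ and $J^*(P)-d(P;\tilde\lambda_{\theta_d}(P))\leq t_d$ is identical with $(L_d,W_d)$ in place of $(L_p,W_p)$.

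Once the VC-dimension bounds are in hand, I substitute $\bar\xi(L_p,W_p)$ and $\bar\xi(L_d,W_d)$ into the statistical-learning sample-size formula and apply it separately to the primal and dual problems with the prescribed $(\epsilon_p,\beta_p)$ and $(\epsilon_d,\beta_d)$. Since $(\theta_p^\star,t_p^\star)$ and $(\theta_d^\star,t_d^\star)$ are by construction feasible for their respective sampled problems, the bound immediately implies that the approximated primal policy satisfies the stated feasibility-and-suboptimality event with probability at least $1-\epsilon_p$ (confidence $1-\beta_p$), and similarly for the dual policy, completing the proof.

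\textbf{Main obstacle.} The nontrivial technical point is passing from Lemma~\ref{lem:VCbound}, which bounds the VC-dimension of the ReLU network as a real-valued map, to the VC-dimension of the actual constraint family appearing in \eqref{eq:primalLearning}--\eqref{eq:dualLearning}, which is a Boolean combination of inequalities composed with affine (for feasibility) or convex-quadratic (for the suboptimality slack) outer functions of the network output and that additionally include the scalar slacks $t_{p/d}$. I would handle this via classical composition and finite-Boolean-closure results for VC classes, verifying that the resulting multiplicative overhead is absorbed by the bound $\bar\xi(L_{p/d},W_{p/d})$ as stated (which is the reading the proposition implicitly adopts). Everything else is a direct plug-in into the statistical-learning inequality of Section~\ref{ssec:sst_bounds}.
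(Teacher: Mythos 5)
Your proof takes essentially the same route as the paper, whose entire argument is the single sentence that the result is ``a direct consequence of Lemma~\ref{lem:VCbound} and Section~\ref{ssec:sst_bounds}''; you simply spell out the plug-in of $\bar\xi(L_{p/d},W_{p/d})$ into the statistical-learning sample-size bound, applied separately to the primal and dual sampled problems. If anything you are more careful than the paper: the composition and Boolean-closure step you flag as the main obstacle --- passing from the VC-dimension of the raw ReLU network to that of the actual constraint family, including the scalar slacks $t_{p/d}$ --- is silently elided in the paper's one-line proof, so your identification of that gap (and of the implicit reading under which the overhead is absorbed into $\bar\xi$) is a faithful, slightly more honest rendering of the same argument.
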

\begin{proof}
The result is a direct consequence of Lemma~\ref{lem:VCbound} and Section~\ref{ssec:sst_bounds}.
\end{proof}
Similar to Proposition~\ref{prop:scenarioOptSampleSize}, Proposition~\ref{prop:DNNSampleSize} allows us to bound the probability of the approximated policies being infeasible and suboptimal. 

\subsection{Deterministic Feasibility and Optimality Certification}\label{sec:aPosterioriGuarantees}
One way of using the result from the previous section is to apply the approximated controller $\tilde{U}_{\theta_p}(\cdot)$ to  system \eqref{eq:paramVaryingSystem}, and, by virtue of Proposition~\ref{prop:scenarioOptSampleSize}, the control inputs will be feasible and at most $t^*_p$ suboptimal with probability at least $1-\epsilon_p$, i.e., for ``most of the time". 
The challenge of this approach, however, is that the feasibility and optimality statements are \emph{only probabilistic}; more precisely, for a given parameter $P_t\in\mathcal{P}$ at time $t$, Proposition~\ref{prop:scenarioOptSampleSize} is \emph{not} able to certify feasibility and optimality of $\tilde U_{\theta_p^*}(P_t)$.
This, however, is problematic in practice, where it is often desirable to attest feasibility and performance of the control input $\tilde U_{\theta_p^*}(P_t)$ before it is applied\footnote{A simple way of always obtaining feasibility is to project $\tilde U_{\theta_p^*}(\bar P)$ into the feasible set at each time step.}.

To address this issue, we next propose a ``deterministic" certification scheme for estimating optimality of $\tilde U_{\theta_p^*}(P_t)$. We begin with the following observation:
\begin{proposition}\label{prop:PDcheck}
    Given a parameter $P$, assume that $\tilde U_{\theta_p^*}(P)$ and $\tilde\lambda_{\theta_d^*}(P)$ are primal and dual feasible, respectively, i.e., $H(P)\tilde U_{\theta_p^*}(P) \leq h(P)$ and $\tilde\lambda_{\theta_d^*}(P)\geq0$. Then, the $\tilde U_{\theta_p^*}(P)$ is no more than
    \begin{equation}\label{eq:PDcheck}
            p(P;\tilde U_{\theta_p^*}(P)) - d(P;\tilde\lambda_{\theta_d^*}(P)),
    \end{equation}
    suboptimal, i.e., $p(P; \tilde U_{\theta_p^*}(P)) - J^*(P) \leq   p(P;\tilde U_{\theta_p^*}(x)) - d(P;\tilde\lambda_{\theta_d^*}(P))$, where $p(\cdot;\cdot)$ and $d(\cdot;\cdot)$ are defined as in \eqref{eq:primalObj} and \eqref{eq:dualObj}, and $J^*(P)$ is the optimal value as defined in \eqref{eq:MPC_primal}.
\end{proposition}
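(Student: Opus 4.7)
The plan is to obtain the bound as a direct consequence of weak duality applied to the quadratic program \eqref{eq:MPC_primal} and its dual \eqref{eq:MPC_dual}, evaluated at the candidate primal point $\tilde U_{\theta_p^*}(P)$ and the candidate dual point $\tilde \lambda_{\theta_d^*}(P)$.

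First I would record the two ingredients that are already available in the paper. From Section~\ref{sec:dualitySection}, weak duality \eqref{eq:weakDuality} tells us that for any primal feasible $U$ and any dual feasible $\lambda$,
\begin{equation*}
    d(P;\lambda)\ \leq\ p(P;U).
\end{equation*}
Furthermore, since \eqref{eq:MPC_primal} is a convex (QP) problem that is assumed feasible and finite on $\mathcal{P}$, strong duality gives $J^*(P) = D^*(P)$, and because $D^*(P)$ is the maximum of $d(P;\cdot)$ over the dual feasible set, any dual feasible $\lambda$ satisfies $d(P;\lambda) \leq D^*(P) = J^*(P)$.

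Then I would chain these facts for the specific choices $U = \tilde U_{\theta_p^*}(P)$ and $\lambda = \tilde\lambda_{\theta_d^*}(P)$, whose primal and dual feasibility are part of the hypothesis. Dual feasibility of $\tilde\lambda_{\theta_d^*}(P)$ yields
\begin{equation*}
    d(P;\tilde\lambda_{\theta_d^*}(P))\ \leq\ J^*(P),
\end{equation*}
and rearranging gives $-J^*(P)\leq -d(P;\tilde\lambda_{\theta_d^*}(P))$. Adding $p(P;\tilde U_{\theta_p^*}(P))$ to both sides produces the advertised inequality
\begin{equation*}
    p(P;\tilde U_{\theta_p^*}(P)) - J^*(P)\ \leq\ p(P;\tilde U_{\theta_p^*}(P)) - d(P;\tilde\lambda_{\theta_d^*}(P)).
\end{equation*}
I would also note, for completeness, that primal feasibility of $\tilde U_{\theta_p^*}(P)$ gives $p(P;\tilde U_{\theta_p^*}(P)) \geq J^*(P)$, so the bound \eqref{eq:PDcheck} is always nonnegative and thus a genuine suboptimality certificate.

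There is really no hard step: the entire statement is a one-line corollary of weak duality plus strong duality of the convex QP. The only mild subtlety I would pay attention to is making sure the sign convention matches the dual formulation \eqref{eq:MPC_dual} (which is a maximization with $\tilde Q(P)\preceq 0$), so that "dual feasible gives a lower bound on $J^*(P)$'' is used in the correct direction. Once this is spelled out, the conclusion follows immediately.
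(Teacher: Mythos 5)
Your proposal is correct and follows essentially the same chain as the paper's proof: dual feasibility gives $d(P;\tilde\lambda_{\theta_d^*}(P)) \leq D^*(P)$, duality gives $D^*(P) \leq J^*(P)$, and rearranging yields the claim (the paper invokes only weak duality $D^*(P)\leq J^*(P)$ where you use strong duality, but this is immaterial). Your added remark that primal feasibility makes the gap \eqref{eq:PDcheck} nonnegative is a harmless bonus not present in the paper.
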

\begin{proof}
    By assumption, $\tilde U_{\theta_p^*}(P)$ and $\tilde\lambda_{\theta_d^*}(P)$ are primal and dual feasible, respectively. 
    Hence, $J^*(P) \leq p(P; \tilde U_{\theta_p^*}(P))$ and  $d(P;\tilde\lambda_{\theta_d^*}(P)) \leq D^*(P)$.
    By weak duality (Section~\ref{sec:dualitySection}), $D^*(P) \leq J^*(P)$, which concludes the proof.
\end{proof}

Proposition~\ref{prop:PDcheck} allows us to (conservatively) upper bound the suboptimality of the approximated primal policy $\tilde U_{\theta_p^*}(\cdot)$  using the approximated dual policy $\tilde\lambda_{\theta_d^*}(\cdot)$. Since computing $\tilde\lambda_{\theta_d^*}(P_t)$ merely evolves simple arithmetic operations,  Proposition~\ref{prop:PDcheck} provides a computationally efficient way to estimate the suboptimality of the approximated primal policy $\tilde U_{\theta_p^*}(P_t)$ for any parameter $P_t\in\mathcal P$, 
without solving the optimization problem \eqref{eq:MPCproblemOrig}.

\subsection{Primal-Dual Policy Learning Algorithm}
We are now ready to present our \emph{Primal-Dual Policy Learning} algorithm, which consist of two phases:
An offline \emph{learning phase}, where the primal and dual policies are trained; and an online \emph{control process} phase, where the performance (``optimality") of the approximated MPC policy $\tilde U_{\theta^*_p}(P_t)$ is verified in real-time  and for each parameter $P_t$ by performing the duality check \eqref{eq:PDcheck}.
During the online check, if the degradation in performance is less than a user-defined level $t_\textnormal{max}$, then the input $\tilde U_{\theta^*_p}(P_t)$ is applied; otherwise, a backup controller is invoked (details are below), and the procedure is repeated at the next time step.
Algorithm~\ref{alg:PDpolicyLearning} summarizes our scheme.

\begin{algorithm}[h!]
\caption{Primal-Dual Policy Learning}\label{alg:PDpolicyLearning}
\begin{algorithmic}[1]
\vspace{0.2em}

\Statex \hspace{-1.05em}
\textbf{Input:} max. backup-call frequency $\epsilon>0$, confidence level $0<\beta\ll 1$, suboptimality level $t_\textnormal{max}>0$.

\Statex \hspace{-1.05em}
\textbf{Select:} $\epsilon_p,\epsilon_d > 0$, $\epsilon_p + \epsilon_d = \epsilon$; $\beta_p, \beta_d > 0$, $\beta_p + \beta_d = \beta$.
\Statex{} 

\Statex\hspace{-1.4em} \textit{Learning Process (offline)}
\vspace{0.2em}
\Statex\hspace{-1em}\textbf{begin training}

\vspace{0.1em}
\hspace{-0.7cm}
\begin{minipage}[t]{\linewidth}
\begin{algorithmic}[1]
\State {Choose approximation \eqref{eq:primalPolicyParam_linear}-\eqref{eq:dualPolicyParam_linear} or \eqref{eq:primalPolicyParam_DNN}-\eqref{eq:dualPolicyParam_DNN}};
\State {Obtain $N_p$ and $N_d$ from Prop.~\ref{prop:scenarioOptSampleSize} or Prop.~\ref{prop:DNNSampleSize}};
\State Learn primal policy $\tilde U_{\theta_p^*}(\cdot) \approx U^*(\cdot)$ as in \eqref{eq:primalLearning};
\State Learn dual policy $\tilde{\lambda}_{\theta_d^*} \approx {\lambda}^*(\cdot)$ as in \eqref{eq:dualLearning};
\State Repeat until $t_d^* + t_p^* \leq t_\textnormal{max}$;
\end{algorithmic}
\end{minipage}
\vspace{0.01em}

\Statex \hspace{-1em}\textbf{end training} 
\Statex{} 
\Statex\hspace{-1.4em} \textit{Control Process (online)}
\vspace{0.2em}
\Statex\hspace{-1em}\textbf{begin control loop} (for $t=0,1,\ldots$)

\vspace{0.1em}
\hspace{-0.7cm}
\begin{minipage}[t]{\linewidth}
\begin{algorithmic}[1]
\State Obtain $P_t$ and evaluate $\tilde U_{\theta_p^*}(P_t)$ and $\tilde\lambda_{\theta_d^*}(P_t)$;
\State Verify $H(P_t)\tilde U_{\theta_p^*}(P_t)\leq h(P_t)$ and $\tilde\lambda_{\theta_d^*}(P_t)\geq0$; 
\State If feasible and \eqref{eq:PDcheck} $\leq t_\textnormal{max}$, then apply first element of $\tilde U_{\theta_p^*}(P_t)$;
\State Else, apply back-up controller.
\end{algorithmic}
\end{minipage}
\vspace{0.01em}
\Statex \hspace{-1em}\textbf{end control loop} {(until end of process)}
\end{algorithmic}
\end{algorithm}
{
Notice that feasibility, i.e. safety, is checked in Step~2 of the control process, while optimality, i.e., performance, is ensured in Step~3 of the control process.
}
In practice it is often desirable to bound the frequency of which the backup controller is invoked since, depending on the choice of the backup controller, calling it might result in reduced performance and/or increased computation time. The following theorem precisely links the samples sizes $N_p$ and $N_d$ in Algorithm~1 with the probability that the backup controller will be applied.

\begin{theorem}\label{thm:verification}
    Assume that $\tilde U_{\theta_p^*}(P_t)$ and $\tilde\lambda_{\theta_d^*}(P_t)$ take the form \eqref{eq:primalPolicyParam_linear}--\eqref{eq:dualPolicyParam_linear} or \eqref{eq:primalPolicyParam_DNN}--\eqref{eq:dualPolicyParam_DNN}. Let $\epsilon\in(0,1)$ be  
    the maximum admissible probability of the backup controller being invoked, $0 < \beta \ll 1$ a desired confidence level, and $\epsilon_p,\epsilon_d,\beta_p,\beta_d\in(0,1)$ be chosen as in Algorithm~\ref{alg:PDpolicyLearning}. If the primal and dual sample sizes $N_p$ and $N_d$ are chosen as in Propositions~\ref{prop:scenarioOptSampleSize} and \ref{prop:DNNSampleSize}, respectively, and if $t_d^*+t_p^* \leq t_\textnormal{max}$, then, with confidence at least $1-\beta$, it holds:
    \begin{align*}
         \mathbb{P}\big[ \hspace{.2cm} & H(P) \tilde U_{\theta_p^*}(P)\leq h(P), \tilde \lambda_{\theta_d^*}(P)\geq0, \\
           & p(P;\tilde U_{\theta_p^*}(P)) - d(P;\tilde\lambda_{\theta_d^*}(P))  \leq t_\textnormal{max} \hspace{.2cm} \big] \geq 1-\epsilon. 
    \end{align*}
\end{theorem}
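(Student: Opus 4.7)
The plan is to derive the conclusion by combining the two probabilistic guarantees from Proposition~\ref{prop:scenarioOptSampleSize} or Proposition~\ref{prop:DNNSampleSize} (depending on the parametrization) through two separate applications of the union bound, one operating on the outer confidence level (randomness in the training data) and one operating on the inner chance constraint (randomness in the run-time parameter $P$), and then invoking weak duality to convert the two sided bounds into a bound on the duality gap.

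First, I would instantiate the relevant proposition for the primal learning problem. With the chosen $N_p$ this yields, with confidence at least $1-\beta_p$, the chance statement
\[
\mathbb{P}\bigl[H(P)\tilde U_{\theta_p^*}(P)\leq h(P),\ p(P;\tilde U_{\theta_p^*}(P)) \leq J^*(P)+t_p^*\bigr] \geq 1-\epsilon_p,
\]
and analogously for the dual learning problem with the chosen $N_d$,
\[
\mathbb{P}\bigl[\tilde\lambda_{\theta_d^*}(P)\geq 0,\ d(P;\tilde\lambda_{\theta_d^*}(P)) \geq J^*(P)-t_d^*\bigr] \geq 1-\epsilon_d,
\]
with confidence at least $1-\beta_d$. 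Because the primal and dual training sets are drawn independently, a union bound over the two confidence events shows that both chance statements above hold \emph{simultaneously} with confidence at least $1-\beta_p-\beta_d=1-\beta$.

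Next, conditioning on this high-confidence event, I would apply a second union bound, this time inside the probability $\mathbb{P}$ over $P$, to the two ``bad'' events (primal infeasibility/oversuboptimality and dual infeasibility/oversuboptimality). The probability of their union is at most $\epsilon_p+\epsilon_d=\epsilon$, so on the complement we simultaneously have primal feasibility $H(P)\tilde U_{\theta_p^*}(P)\leq h(P)$, dual feasibility $\tilde\lambda_{\theta_d^*}(P)\geq 0$, and the sandwich $d(P;\tilde\lambda_{\theta_d^*}(P)) \geq J^*(P)-t_d^*$ together with $p(P;\tilde U_{\theta_p^*}(P)) \leq J^*(P)+t_p^*$. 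Subtracting the two inequalities makes $J^*(P)$ cancel, giving
\[
p(P;\tilde U_{\theta_p^*}(P)) - d(P;\tilde\lambda_{\theta_d^*}(P)) \leq t_p^*+t_d^* \leq t_\textnormal{max},
\]
where the last inequality is precisely the termination condition of the training loop in Algorithm~\ref{alg:PDpolicyLearning}. This is exactly the duality-gap quantity used in Proposition~\ref{prop:PDcheck}, so all three predicates inside the probability in the theorem statement hold on the complement, completing the bound $\geq 1-\epsilon$ at confidence $\geq 1-\beta$.

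No step here is technically hard; the only care required is keeping the two levels of randomness distinct, in particular not conflating the confidence (over training samples) with the chance constraint (over the run-time parameter), and ensuring the two union bounds are applied at the correct level. Once the bookkeeping is set up correctly, the conclusion follows directly from weak duality exactly as in the proof of Proposition~\ref{prop:PDcheck}, combined with the splits $\epsilon=\epsilon_p+\epsilon_d$ and $\beta=\beta_p+\beta_d$ specified in Algorithm~\ref{alg:PDpolicyLearning}.
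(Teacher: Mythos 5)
Your proposal is correct and matches the paper's argument, which is stated in one line as an application of the union bound together with Propositions~\ref{prop:scenarioOptSampleSize}, \ref{prop:DNNSampleSize} and \ref{prop:PDcheck}; you have simply made explicit the two levels at which the union bound is applied (confidence over training data, probability over $P$) and the cancellation of $J^*(P)$ that Proposition~\ref{prop:PDcheck} encapsulates. The only cosmetic remark is that the union bound over the confidence events does not require independence of the primal and dual training sets, so that hypothesis is unnecessary.
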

\begin{proof}
   Application of the union bound \cite{bonferroni1936teoria} and Propositions~\ref{prop:scenarioOptSampleSize}, \ref{prop:DNNSampleSize} and \ref{prop:PDcheck} immediately lead to the desired result.
\end{proof}
Theorem~\ref{thm:verification} ensures that, if the primal and dual policies are trained with a ``sufficiently large" training data set, then the primal and dual policies will be often ``close" to each other, measured in terms of the objective function, and the backup controller will be invoked with probability less than $\epsilon$. Since \eqref{eq:PDcheck} constitutes an upper bound on the primal policy's suboptimality, it follows immediately that the primal policy $\tilde U_{\theta_p^*}$ is feasible and at most $t_\textnormal{max}$-suboptimal with probability, as formalized in the following corollary.
\begin{corollary}\label{cor:PDpolicyLearning}
    With confidence at least $1-\beta$, the control policy $\tilde U_{\theta_p^*}(\cdot)$ obtained from Algorithm~1 is feasible and at most $t_\textnormal{max}$-suboptimal with probability at least $1-\epsilon$. That is, with confidence at least $1-\beta$, we have $\mathbb{P}\big[ H(P) \tilde U_{\theta_p^*}(P)\leq h(P), p(P;\tilde U_{\theta_p^*}(P))  \leq J^*(P) + t_\textnormal{max} \big] \geq 1-\epsilon$.
\end{corollary}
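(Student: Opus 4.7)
The plan is to derive the corollary as a direct consequence of Theorem~\ref{thm:verification} combined with Proposition~\ref{prop:PDcheck}, by showing that the event appearing inside the probability in the corollary is implied by (i.e., contains) the event appearing in Theorem~\ref{thm:verification}. Since probability is monotone under set inclusion, the probability bound then transfers, and the ``with confidence at least $1-\beta$'' qualifier simply carries over unchanged because no additional randomized step is introduced.

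More concretely, I would first fix a realization of the training data that belongs to the high-confidence set of Theorem~\ref{thm:verification}, i.e., one of measure at least $1-\beta$ on which the probabilistic guarantee of that theorem holds. For such a realization, I would then argue pointwise in $P$: suppose $P\in\mathcal{P}$ satisfies the three conditions inside Theorem~\ref{thm:verification}, namely primal feasibility $H(P)\tilde U_{\theta_p^*}(P) \leq h(P)$, dual feasibility $\tilde\lambda_{\theta_d^*}(P) \geq 0$, and the duality-gap bound $p(P;\tilde U_{\theta_p^*}(P)) - d(P;\tilde\lambda_{\theta_d^*}(P)) \leq t_\textnormal{max}$. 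Then Proposition~\ref{prop:PDcheck} applies (its hypotheses are exactly the first two conditions) and yields
\begin{equation*}
    p(P;\tilde U_{\theta_p^*}(P)) - J^*(P) \;\leq\; p(P;\tilde U_{\theta_p^*}(P)) - d(P;\tilde\lambda_{\theta_d^*}(P)) \;\leq\; t_\textnormal{max},
\end{equation*}
so in particular $p(P;\tilde U_{\theta_p^*}(P)) \leq J^*(P) + t_\textnormal{max}$. Combined with primal feasibility, this means $P$ lies in the event appearing in the corollary.

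Therefore, for every fixed realization of the training data in the $1-\beta$ confidence set, the event
\begin{equation*}
    \{P : H(P) \tilde U_{\theta_p^*}(P)\leq h(P),\ \tilde \lambda_{\theta_d^*}(P)\geq0,\ p - d \leq t_\textnormal{max}\}
\end{equation*}
is a subset of
\begin{equation*}
    \{P : H(P) \tilde U_{\theta_p^*}(P)\leq h(P),\ p(P;\tilde U_{\theta_p^*}(P)) \leq J^*(P) + t_\textnormal{max}\}.
\end{equation*}
Monotonicity of $\mathbb{P}$ then gives that the latter has probability at least $1-\epsilon$, which is the claim of the corollary, and the whole statement holds with the same confidence $1-\beta$ inherited from Theorem~\ref{thm:verification}.

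I do not anticipate a serious obstacle: the entire argument is a containment-of-events observation stitched to Proposition~\ref{prop:PDcheck}. The only subtlety worth being explicit about is that no additional union bound is needed here (unlike in Theorem~\ref{thm:verification} itself), because we are only weakening the event, not intersecting new random conditions; hence $\beta$ is not further split. This makes the corollary essentially a rewording of Theorem~\ref{thm:verification} through the lens of the weak-duality inequality.
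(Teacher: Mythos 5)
Your argument is correct and matches the paper's reasoning exactly: the paper justifies the corollary by noting that \eqref{eq:PDcheck} upper-bounds the primal suboptimality (Proposition~\ref{prop:PDcheck}), so the event in Theorem~\ref{thm:verification} is contained in the event of the corollary and the probability bound transfers by monotonicity, with the confidence level $1-\beta$ carried over unchanged. Your explicit handling of the event containment and the remark that no further union bound is needed are faithful elaborations of the same argument.
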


In the following, we discuss aspects and extensions to our primal-dual policy learning framework (Algorithm~1).

\subsection{Discussion}\label{sec:discussion}
\subsubsection{Iterative Offline Learning Process}

The learning/training process proposed in Algorithm~1 is iterative in nature. This is because the suboptimality estimates $t^*_p$ and $t^*_d$ are only known after solving the primal/dual learning problems \eqref{eq:primalLearning}--\eqref{eq:dualLearning}. Indeed, if the approximated policies are not ``good enough", i.e., $t^*_p+t^*_d > t_\textnormal{max}$, where $t_\textnormal{max}$ is the user-defined maximal admissible suboptimality level, then the policies need to be retrained using a parametrization that captures a richer class of functions. In case a neural network parametrization of form \eqref{eq:primalPolicyParam_DNN}--\eqref{eq:dualPolicyParam_DNN} is used, one way to increase the richness (``expressiveness") of the neural network is to increase the depth and/or width of the network. Notice however, whenever the parametrization is altered, the sample sizes $N_p$ and $N_d$ need to be adapted accordingly.


We point out that, instead of using the proposed primal/dual learning problems \eqref{eq:primalLearning}--\eqref{eq:dualLearning}, alternative training and verification methods can be used for the offline phase. One such approach was presented in our previous work \cite{zhang2019safe}, where the primal-dual policies are not synthesized using predefined sample sizes $N_p$ and $N_d$, but obtained by solving simpler unconstrained supervised learning problems which minimize the mean squared error, e.g., $\min_{\theta_p}\sum_{i=1}^{N} \| \tilde{U}_{\theta_p}(P^{(i)}) - U^*(P^{(i)}) \|$ and $\min_{\theta_d} \sum_{i=1}^{N} \| \tilde{\lambda}_{\theta_d}(P^{(i)}) - \lambda^*(P^{(i)}) \|$. Performance (safety and optimality) of the trained policies are then ensured using a sampling-based verification scheme, see  \cite{zhang2019safe} for details.

\subsubsection{Computational Complexity}
Training the policies during the learning phase is typically computationally more demanding than evaluating the policies during the control process, since it requires solving the sampled optimization problems \eqref{eq:primalLearning}--\eqref{eq:dualLearning}. 
These optimization problems can consist of a large number of constraints and can even be non-convex, for example when parameterization \eqref{eq:primalPolicyParam_DNN}--\eqref{eq:dualPolicyParam_DNN} is used. Fortunately, the training can be carried out offline, and given today's availability in computation power, numerical optimization algorithms and toolboxes, it is generally possible to (at least approximately) solve problems \eqref{eq:primalLearning}--\eqref{eq:dualLearning}. 

Contrary to training, evaluating the approximated policies $\tilde{U}_{\theta_p^*}(\cdot)$ and $\tilde\lambda_{\theta_d^*}(\cdot)$ and computing the gap \eqref{eq:PDcheck} can be carried out very efficiently since they only consists of (straight-forward) functions evaluations. This is in contrast to solving the MPC problem \eqref{eq:MPCproblemOrig} online, which typically involves a (non-trivial) iterative algorithm that may require matrix inversions in each step.
In summary, our policy learning framework moves the ``burden" of solving the optimization problem from online to offline, i.e., from a time-critical part to a non-time-critical part.

\subsubsection{Choice of Policy Parametrization}
The exact choice of the basis functions in \eqref{eq:primalPolicyParam_linear}--\eqref{eq:dualPolicyParam_linear}, as well as the size, depth and type of activation function in the neural network approximation   \eqref{eq:primalPolicyParam_DNN}--\eqref{eq:dualPolicyParam_DNN} is highly problem-dependent, and should be experimented with in practice. Furthermore, our primal-dual policy learning scheme is not restricted to policies of the forms  and \eqref{eq:primalPolicyParam_DNN}--\eqref{eq:dualPolicyParam_DNN}, but can also accommodate for other parametrizations. This is because, regardless of their representations and assuming feasibility of $\tilde{U}_{\theta_p^*}(P)$ and $\tilde{\lambda}_{\theta_d^*}(P)$ , the dual cost $d(P;\tilde{\lambda}_{\theta_d^*}(P))$ is always lower bounds the primal cost $p(P;\tilde{U}_{\theta_p^*}(P))$, for all $P\in\mathcal{P}$. Indeed, any function parametrization, as long as  upper bounds on the sample sizes $N_p$ and $N_d$ can be established, can be used, given freedom to the control engineer to tailor the policy representation to the problem at hand.

\subsubsection{Backup Controller}
Similar to the choice of the policy parameterization, the choice of the backup controller, which is used in Algorithm~1 as a fall-back strategy whenever the primal policy cannot to be certified to be feasible and less than $t_\textnormal{max}$-suboptimal, is highly problem-dependent. For example, a conservatively tuned PID or LQR controller can be used which sacrifices performance for constraint satisfaction. Alternatively, one may chose to solve the MPC problem each time the check fails and use the solutions from the primal and dual policies to warm-start the numerical solver. 
{
While further investigation is needed, preliminary analysis  shows that such warm-starting numerical solvers can reduce the computation time by up to 50\%.
}

\section{Case Study: Integrated Chassis Control}\label{sec:simResults}
In this section we present a numerical case study in the field of integrated chassis control (ICC) for vehicles. Roughly speaking, the goal in chassis control is to improve a vehicle's dynamics and user comfort by actively controlling a vehicle's motions. Traditionally, chassis control is carried out by many independent subsystems. More recently however, there have been attempts to consider the coupling between the individual subsystems, giving rise to so-called integrated chassis control (ICC) \cite{ChandrasekanICC2011}. To coordinate the whole system, a multi-layer structure has been proposed in the literature, which first computes the desired forces and moments at the vehicle's center of gravity in the upper-layer, and then distributes them to each subsystem in the lower-layer. 
Due to this top-down strategy, the upper-layer controller needs to be developed more comprehensively. This is typically achieved by incorporating a vehicle model that considers both state coupling as well as state and input constraints.
Hence, MPC is a natural control strategy, since it allows the incorporation of input and state constraints in a disciplined manner, and is able to handle the multivariate nature of the task.

In the following we study the problem of calculating the yaw moment, the roll moment, and the lateral force in the upper level 
of an ICC structure. Specifically, given the driver's current inputs and a reference trajectory, the objective is to find the optimal torques and forces that is generated by the MPC controller. Unfortunately, it turns out that incorporating MPC into ICC on a real vehicle is a challenging task, because the computation power is very limited and yet the systems should be operated at least at 100~Hz.
In the following, we demonstrate how the proposed primal-dual policy learning scheme can be used to help enable MPC on such resource-constrained fast systems.

\subsection{Problem Formulation}
We consider a linear parameter-varying system of the form
\begin{align}\label{eq:icc_model_simple}
 x_{t+1} = A(v_t)x_t + B(v_t)u_t + E(v_t) \delta_t,\quad y_t = C x_t,  
\end{align}
where $x := [\beta, r, \phi, \dot{\phi}] \in \R^4$ is the state, $u := [\Delta M^z, \Delta M^x, \Delta F^y]\in\R^3$ is the input, $y_t \in \mathbb{R}^3$ is the output, $v_t$ is the vehicle's longitudinal velocity, and $\delta_t$ is the driver's steering input, which we assume can be predicted, see Table~\ref{tab:stateInputVariables}. 
The system matrices  are given in the Appendix.

\begin{table}[h!]
\caption{Variables used in \eqref{eq:icc_model_simple}.}
\label{tab:stateInputVariables}
\centering 
\ra{1.3}
\begin{tabular}{@{}l l | l l @{}}\toprule
variable & description & variable & description\\
 \midrule
$\beta$ & side-slip angle   & $\Delta M^z$  & yaw moment            \\
$r$     & yaw rate          & $\Delta M^x$  & roll moment          \\
$\phi$  & roll angle        & $\Delta F^y$  & lateral force          \\
$\dot\phi$  & roll rate     & $v$           & long.\ velocity          \\
$\delta$    & driver steering angle & $T=3$       & pred.\ horizon               \\
\bottomrule
\end{tabular}
\end{table} 

The control objective is to minimize the output tracking error while satisfying input constraints and input rate constraints. Hence, the MPC problem is given by
\begin{equation}\label{eq:ICC_example}
    \begin{array}{llll}
        \displaystyle \min_{X_t,U_t} & \displaystyle \sum_{k=0}^{T-1} (y_{k|t}-y_{k|t}^\text{ref})^\top Q_s (y_{k|t}-y_{k|t}^\text{ref}) + u_{k|t}^\top R_s u_{k|t}  \\
        \ \ \text{s.t.}   & x_{k+1|t} = A(v_{t}) x_{k|t} + B(v_{t}) u_{k|t} + E(v_{t})\delta_{k|t}, \\
         & y_{k|t} = C x_{k|t}, \ |u_{k|t}|\leq \bar{u},\ |u_{k|t} - u_{k-1|t}| \leq \xoverline{\Delta u}, \\
         & x_{0|t} = x_t,\ u_{-1|t} = u_{t-1},\ k=0,\ldots,T-1,
    \end{array}
\end{equation}
where $Q_s,R_s \in \mathbb{S}^{3 \times 3}_+$ are positive definite matrices, $y_{k|t}^\text{ref}\in\R^3$ are given reference signals, $\bar{u}$ defines the input constraints, $\xoverline{\Delta u}$ defines the rate constraints, and $u_{t-1}$ is the previous input. The parameters in \eqref{eq:ICC_example} are $P_t := (x_t, v_t, \{y_{k|t}^\text{ref},\delta_{k|t}\}_{k}, u_{-1|t})\in \mathcal{P} \subset\R^{19}$, where $\mathcal{P}$ is chosen appropriately.
Since the velocity $v_t$ enters the dynamics in a nonlinear fashion (see Appendix), the optimal control law $U^*(P_t)$ cannot be derived using standard methods from explicit MPC. In the following, we approximate $U^*(\cdot)$ using the approach described in Section~\ref{sec:primalDualPolicyLearning}. Throughout, we consider a horizon of $T=3$ such that $U^*(\cdot): \R^{19} \to \R^9$ and $\lambda^*(\cdot): \R^{19} \to \R^{36}$. 

\subsection{Training and Verification} \label{sec:trainnets}
In this section, we illustrate our proposed primal-dual policy learning method on \eqref{eq:ICC_example} for the case when $(i)$ the policies are approximated using a weighted sum of radial basis functions (``radial basis network, RBN"), and $(ii)$ when a neural network is used.
For both cases, the following parameters were chosen for  Algorithm~1: $\epsilon = 0.1$, $\beta=2\cdot10^{-7}$, $t_\textnormal{max} = 4$, and $\epsilon_p = \epsilon_d = \epsilon/2$, $\beta_p = \beta_d = \beta/2$. The training data $P^{(i)}$ are sampled uniformly from $\mathcal{P}$. We next briefly describe our training procedure for both parametrizations

\subsubsection{Weighted Sum of Basis Functions} 
Inspired by \cite{DomahidiLearning2011}, we consider the following primal and dual radial basis functions 
\begin{align*}
& \kappa_{p, k} (P) = ( \kappa_p(P)^\top \otimes \mathbb{I}_9) [\cdot,k],~k = 1,2,\dots, L_p, \\
& \kappa_{d, k} (P) = ( \kappa_d(P)^\top \otimes \mathbb{I}_{36}) [\cdot,k],~k = 1,2,\dots, L_d, \\
\end{align*}
where $M[\cdot,k]$ denotes the $k$-th column of a matrix $M$, $\kappa_p(P) := [ \hat\kappa_{p,1}(P) , \ldots, \hat\kappa_{p,n_p^\textnormal{rb}}(P) ]$, $\kappa_d(P) := [ \hat\kappa_{d,1}(P) , \ldots, \hat\kappa_{d,n_d^\textnormal{rb}}(P) ]$, and
\begin{align*}
& \hat\kappa_{p,j}(P) = \left ( 1- \frac{\Vert W_s(P - P_c^{(j)}) \Vert_2}{2} \right ) ^3, ~j=1,\ldots,n^{\textnormal{rb}}_{p}, \\
& \hat\kappa_{d,j}(P) = \left ( 1 - \frac{\Vert W_s(P - P_c^{(j)}) \Vert_2}{2} \right )^3, ~j=1,\ldots,n^{\textnormal{rb}}_{d},
 \end{align*}
where $W_s$ is a problem-dependent (diagonal) scaling matrix.
Above, $P_c^{(j)}$ denotes the center of the $j$-th radial basis function, and $L_p = 9n^\textnormal{rb}_{p}$ and $ L_d = 36n^\textnormal{rb}_{d}$ are the number of radial basis functions for primal and the dual RBNs respectively. 
The centers of basis functions $P_c^{(j)}$ for $j = 1,2,\dots,n^\textnormal{rb}_{p/d}$ were chosen at random. 
We obtain the sample sizes for solving \eqref{eq:primalLearning}--\eqref{eq:dualLearning} using Proposition~1. 
We start from $n^\textnormal{rb}_p = n^\textnormal{rb}_{d} = 10$ and increment each gradually by $10$ after each failed training attempt. The obtained final values are: $n^\textnormal{rb}_p = 130, n^\textnormal{rb}_{d} = 100, 
t_p^* = 2.703$ and $t_d^* = 0.7256$, satisfying $t_p^* + t_d^* \leq t_\textnormal{max}$.

\subsubsection{Deep Neural Network}
Motivated by \cite{ChenMorariACC2018}, we use a deep neural network with rectified linear unit  activation functions (DNN-ReLU), with training sample sizes from Proposition~\ref{prop:DNNSampleSize}. During the training phase, we start with a neural network of width 5 and depth 3, and increment its width gradually by 5 neurons after each failed training attempt, until $t_p^*+t_d^* \leq t_\textnormal{max}$. The final network size we obtained was of width 15 for the primal network $\tilde U_{\theta_p^*}(\cdot)$, and of width 5 for the dual network $\tilde{\lambda}_{\theta_d^*}(\cdot)$,
such that $t_p^* = 0.3764$ and $t_d^* = 1.1347$.
Hence, by virtue of \eqref{eq:PDcheck}, the approximated MPC policy $\tilde U_{\theta_p^*}(\cdot)$ will be at most 1.5111 suboptimal, with probability at least $1-\epsilon$.

\subsubsection{Monte Carlo Simulations}
In this section, we are interested in evaluating the performance and conservatism of the approximated policies. To this end, we evaluate the empirical suboptimality levels $\hat t_p(P) := p(P; \tilde U_{\theta_p^*}(P)) - J^*(P)$, $\hat t_d(P) := J^*(P) - d(P;\tilde \lambda_{\theta_d^*}(P))$ and $\hat t(P) := p(P;\tilde U_{\theta_p^*}(P)) - d(P;\tilde\lambda_{\theta_d^*}(P))$ for 1'000'000 random parameters $P^{(i)}$, and also determine the empirical violation probabilities $\hat \epsilon_p$, $\hat \epsilon_d$ and $\hat\epsilon$. These values are reported in Table~\ref{tab:subOpt_Convex} for RBN, and in Table~\ref{tab:subOpt} for DNN-ReLU.


\begin{table}[h!]
\caption{Empirical suboptimality levels and violation probabilities for RBN. }
\label{tab:subOpt_Convex}
\centering 
\ra{1.3}
\begin{tabular}{@{}l | l l l || c c c @{}}\toprule
 & $\hat t_p$ & $\hat t_d$ & $\hat t$ & $\hat\epsilon_p$ & $\hat\epsilon_d$ & $\hat\epsilon$ \\
 \midrule
 mean       & 1.768     & 0.5957       & 2.3644        &   &    &  \\ 
 median     & 1.907  & 0.6667  & 2.5451        & 0.412\%   & 2.315\%   & 1.452\% \\ 
 maximum    &  5.044    & 2.7727       & 6.1412        &           &       &  \\
\bottomrule
\end{tabular}
\end{table} 

\begin{table}[h!]
\caption{Empirical suboptimality levels and violation probabilities for DNN-ReLU. }
\label{tab:subOpt}
\centering 
\ra{1.3}
\begin{tabular}{@{}l | l l l || c c c @{}}\toprule
 & $\hat t_p$ & $\hat t_d$ & $\hat t$ & $\hat\epsilon_p$ & $\hat\epsilon_d$ & $\hat\epsilon$ \\
 \midrule
 mean       & 0.0034     & 0.0495       & 0.053        &   &    &  \\ 
 median     & 8.06 $\cdot 10^{-5}$    & 0.0191       & 0.020        & 0.012\%   & 0.044\%   & 0.001\% \\ 
 maximum    &  2.0842    & 4.1210       & 4.216        &           &       &  \\
\bottomrule
\end{tabular}
\end{table} 

We see from the tables that the approximated primal and dual policies result in control laws that are generally close to optimal, with a median duality gap of $\hat t = 2.5451$ (RBN) and $\hat t = 0.020$ (DNN-ReLU). Furthermore, we see that while the worst-case duality gap  $\hat t$ of both RBN and DNN-ReLU exceed $t_\textnormal{max}=4$, this occurs with probability 1.45\% (RBN) and 0.001\% (DNN-ReLU), which is significantly lower than the targeted value of $\epsilon=0.1$. 
Therefore, statistically speaking, the backup controller in Algorithm~1 will only be called $0.001\%$-fraction of all times when the DNN-ReLU controller is deployed. The results from Table~\ref{tab:subOpt_Convex} and Table~\ref{tab:subOpt} imply that, for this numerical example, the sample sizes provided by Proposition~\ref{prop:scenarioOptSampleSize} and Proposition~\ref{prop:DNNSampleSize} are conservative; in practice, the trained policies perform better than expected, both in terms of violation probability and suboptimality level, especially when a neural network is used. 


\subsection{Comparison with Online MPC and Explicit MPC}
In this section, we compare the approximated MPC controllers with online MPC and a suboptimal explicit MPC, both in terms of computational complexity and optimality of the generated inputs. 
To solve online MPC, we use Mosek \cite{mosek} and Gurobi \cite{optimization2012gurobi}, two state-of-the-art numerical solvers.
In terms of explicit MPC, recall that standard explicit MPC method cannot be applied to \eqref{eq:ICC_example} because the longitudinal velocity $v$ enters the system matrices $A$ and $B$ parameterically. To circumvent this issue, we discretize this velocity parameter $v$, and compute an explicit MPC law $U_{v}^*(\cdot)$ for each value of $v$ using the MPT toolbox \cite{MPT3}\footnote{We notice that with $v$ fixed, the resulting controller $U_{v}^*(\hat P_t)$ is polyhedral piecewise affine in the remaining parameters $\hat P_t = (x_t,  \{y_{k|t}^\text{ref},\delta_{k|t}\}_{k}, u_{-1|t})\in\R^{18}$.}. Due to this approximation, we call it \emph{suboptimal explicit MPC}. The control law can now be looked up as follows: For every parameter $P\in\mathcal{P}$, we first extract the associated velocity parameter, and evaluate the explicit MPC whose velocity is closest to the above.


\subsubsection{Computation Time}

Table~\ref{tab:compTime} reports the computation time of our primal-dual policy approximation scheme using RBN and DNN-ReLU, the online MPC, as well as the suboptimal explicit MPC, evaluated on $10'000$ randomly extracted test samples. The timings are taken on an early 2016 MacBook, that runs on a 1.3 GHz Intel Core m7 processor and is equipped with 8 GB RAM and 512 GB SSD. We use MATLAB generated MEX files to determine the run-time of RBN, DNN-ReLU and suboptimal explicit MPC. The optimization problems are formulated in YALMIP \cite{lofberg:05}, and the timings are those reported by the solvers themselves.

\begin{table}[h!]
\caption{Computation Times, rounded to two significant digits.}
\label{tab:compTime}
\centering 
\ra{1.3}
\begin{tabular}{@{}l | l  l |  l  l | l @{}}\toprule
time [ms] & RBN & DNN-ReLU  & Gurobi & Mosek  & subopt. EMPC  \\
 \midrule
 min.   &  0.172  &  0.024  & 1.4  & 1.5   & 0.014 \\ 
 max.   & 0.386   &  0.056  & 2.5  & 3.0   & 0.053 \\
 mean   &  0.206  &  0.029  & 1.6  & 1.8   & 0.018 \\
 std.   & 0.034   &  0.006  & 0.16 & 0.25  & 0.0039 \\
\bottomrule
\end{tabular}
\end{table} 

From Table~\ref{tab:compTime}, we see that RBN and DNN-ReLU are significantly faster than online MPC. On average, the RBN is $\sim \!\! 7.7\times$ faster than Gurobi and $\sim \!\! 8.7\times$ faster than Mosek, while DNN-ReLU is $\sim \!\! 55\times$ faster than Gurobi, and $\sim \!\! 62\times$ faster than Mosek.
This is because evaluating an RBN and DNN-ReLU just involves simple function evaluations and  matrix-vector multiplications, but no (iterative) matrix inversions, as opposed to the case of numerical optimization. 
Moreover, we observe that the evaluation times obtained using DNN-ReLU are more consistent than those of Gurobi and Mosek, with a standard deviation of~10\%. 
Thus by replacing an online solver with DNN, we can achieve significant speedup, although "training of DNN" can be very expensive. 
On the other hand, notice that the suboptimal explicit MPC controller is significantly faster than both policy approximation schemes and online MPC, with an average execution time of 0.018~ms. This is because each suboptimal explicit MPC $U_v^*(\cdot)$ only has 9 polyhedral regions, allowing rapid evaluation of the piecewise-affine control laws.


\subsubsection{Suboptimality}\label{ssec:memory_sub}
In this section, we compare the suboptimality level of our approximated MPC controllers (RBN and DNN-ReLU) with that of the suboptimal explicit MPC (EMPC) \footnote{Online MPC is not considered as it always solves for the optimal control input.}. As a metric, we consider the \emph{relative} suboptimality
\begin{align*}
    & \hat{t}_p^\text{rel}(P;\tilde{U}) :=  \frac{p(P;\tilde{U}(P)) - J^*(P)}{J^*(P)}. 
\end{align*}
where  $J^*(P)$ is the optimal value obtained by solving \eqref{eq:ICC_example} with parameter $P$, and $\tilde U(P)$ is either given by the RBN/DNN-ReLU MPC or the suboptimal explicit MPC. The relative primal suboptimality levels are listed in Table~\ref{tab:subOpt_expl}, with two different velocity discretization schemes for suboptimal explicit MPC, $\Delta v = 1$ m/s and $\Delta v = 0.25$ m/s. 


\begin{table}[h!]
\caption{Relative primal sub-optimality of RBN \& DNN-ReLU, and suboptimal explicit MPC approximations, evaluated on 10'000 samples.}
\label{tab:subOpt_expl}
\centering 
\ra{1.3}
\begin{tabular}{@{}l | l  | l | l l @{}} \toprule
$\hat t_p^\text{rel}$ & RBN & DNN-ReLU  & subopt. EMPC & subopt. EMPC\\
 &   &  & ($\Delta v = 1$ m/s)  & $(\Delta v = 0.25$ m/s)\\
 \midrule
 min.    & $3.15\cdot10^{-7}$ & $5.76\cdot10^{-10}$    & $5.96\cdot 10^{-7}$ & $3.10\cdot 10^{-6}$      \\ 
 max.    & $0.0626$ & $3.01\cdot10^{-3}$    &  $0.917$ &  $0.9033$     \\
 mean   &  $0.0049$  & $3.395\cdot10^{-5}$     & 0.128    &  $0.120$    \\
 std.  &  $0.0039$ & $1.03\cdot10^{-4}$  &  0.201 &  $0.2020$
 \\
\bottomrule
\end{tabular}
\end{table} 

We see from Table~\ref{tab:subOpt_expl} that the RBN and DNN-ReLU approximations yield significantly better performance than suboptimal explicit MPC, with a maximum suboptimality level of 6\% (RBN) and 0.3\% (DNN-ReLU), compared to 91.7\% when the suboptimal EMPC is deployed. Furthermore, the relative suboptimality of RBN and DNN-ReLU controllers are, on average, $\sim \!\! 26\times$ and $\sim \!\! 3800\times$ lower than that of suboptimal EMPC. Indeed, with an average suboptimality level of $3.4\cdot 10^{-5}$, the DNN-ReLU controller is essentially optimal for all practical purposes. It is also interesting to see that, for this example, gridding the velocity parameter $v$ finer in suboptimal EMPC only yielded insignificant performance improvements, with the average suboptimality level dropping from 12.8\% to 12.0\%. 

We close this comparison by pointing out that storing the suboptimal EMPC controller also requires significant memory. Indeed, for our problem, each $U_{ v}^*(\cdot)$ is a piecewise affine controller defined on nine polyhedral regions, and requires approximately 47.5~KB memory to be stored. If the velocity is gridded with  $\Delta v =1$ m/s from $v=2,\ldots,25$ m/s, then 1.14~MB memory is needed just to store the regions and corresponding control laws for suboptimal EMPC. If gridded with $\Delta v = 0.25$ m/s, then the space required to store the suboptimal explicit MPC increases to 4.56~MB, which is more than {four times of} what is available on our targeted automative-grade electronic control unit (ECU), see below.

\subsection{Experimental Results on  Automotive ECU}\label{ssec:ecu_impl}
In this section we present experimental results of our proposed controllers. Specifically, we implement Algorithm~\ref{alg:PDpolicyLearning} with the above discussed DNN-ReLU on an automotive-grade Infineon AURIX family TC27x series ECU.
We chose DNN-ReLU over RBN due to DNN-ReLU's superior performance both in terms of optimality (Table~\ref{tab:subOpt_expl}) and computation time (Table~\ref{tab:compTime}) during our numerical simulations.  
The considered electronic control unit (ECU) has a clock speed of 200 MHz and a total of 4~MB ROM, of which approximately 1~MB is available for the controller. The requirement is to run the MPC controller at least at 50~Hz, ideally at 100~Hz. 
Table~\ref{tab:ECU_data} compares our DNN-ReLU implementation with online MPC, where we use the state-of-the-art high-speed solver CVXGEN \cite{MattingleyBoydCVXGen2012} to solve the optimization problem \eqref{eq:ICC_example} online. To satisfy the 50~Hz requirement, the number of iterations CVXGEN performs is limited to 23.
Suboptimal explicit MPC was not implemented due to its higher suboptimality levels (Table~\ref{tab:subOpt}) and, most importantly, large memory requirement (Section~\ref{ssec:memory_sub}). 
\begin{table}[h!]
\caption{Average computation time and storage requirements on ECU.}
\label{tab:ECU_data}
\centering 
\ra{1.3}
\begin{tabular}{@{}l | l  l l @{}}\toprule
Method & time [ms]  & memory [KB] \\
 \midrule
 DNN-ReLU   & 1.8     & 114 \\   
 CVXGEN     &  18.7     & 322 \\ 
\bottomrule
\end{tabular}
\end{table} 



We see from Table~\ref{tab:ECU_data} that, with an average computation time\footnote{Due to technical limitations, we were only able to determine the average computation time.} of 1.8~ms, DNN-ReLU is about $\sim \!\! 10\times$ faster than CVXGEN, which requires on average 18.7~ms to solve the MPC problem \eqref{eq:ICC_example}. In fact, since the requirement is to run the controller at least at 100~Hz, online MPC with CVXGEN is not real-time feasible without changing the ECU. 

To experimentally evaluate the performance of the controller, we deployed the DNN-ReLU controller onto the Berkeley Autonomous Car, a retrofitted 2013 Hyundai Genesis G380. The experiment, which consists of a single lane change over a period of approximately 12~seconds, was carried out at the Hyundai-Kia's California Proving Ground. A robot performs the lane change, during which the vehicle is accelerated from 3m/s to 21.5m/s. Our analysis revealed that, during our experiments, the backup controller  never had to be activated. Moreover, the average relative primal suboptimality level  for our DNN-ReLU controller was $1.1\%$, rendering it near-optimal.

We conclude this section by pointing out that the proposed primal-dual policy learning scheme is a promising method to generate fast and near-optimal MPC policies.

\section{Conclusion}\label{sec:conclusion}
In this paper, we proposed a new method for approximating the explicit MPC control policy for linear parameter varying systems using supervised learning techniques. The feasibility and optimality of the approximated control policy is ensured with high probability using the theory of randomized optimization.
Furthermore, we introduced a novel \emph{dual policy}-based verification scheme that certifies the optimality of the approximated control policy during run-time. Since this proposed verification scheme only requires the evaluation of trained policies, our algorithm is computationally efficient, and can be implemented even on resource-constrained systems. Our numerical case study has revealed that then the proposed primal-dual policy learning framework allows the integrated chassis control problem to be solved 10x faster on an production-grade electronic control unit compared to the state-of-the-art numerical solver CVXGEN, while sacrificing minimal amount of performance, and maintaining safety of the vehicle.



\section*{ACKNOWLEDGMENT}
This research was partially funded by the Hyundai Center of Excellence at UC Berkeley. This work  was  also  sponsored  by  the  Office  of  Naval  Research (ONR-N00014-18-1-2833), and by Ford Motor Company. The authors thank Dr.~Yi-Wen Liao and Dr.~Jongsang Suh for helpful discussions on the ICC problem.

\begin{appendix}
\subsection*{System Modelling}\label{app:systemModel}
The system matrices are given as follows \cite{TAKANO2003149}: 
\begin{align*}
    A^c(v_t) & = \begin{bmatrix} -\frac{\bar{C}I_x}{fv_t} + \mathrm{dt} & -1+ \frac{I_x\bar{C}_1}{fv_t^2} & \frac{m_sh_s\bar{M}}{fv_t} & \frac{m_s h_s C_\phi}{fv_t} \\ \frac{\bar{C}_1}{I_z} & -\frac{\bar{C}_2}{I_zv_t}+\mathrm{dt} & 0 & 0\\ 0 & 0 & \mathrm{dt} & 1\\ -\frac{m_sh_s\bar{C}}{f} & -\frac{m_sh_s\bar{C}_1}{fv_t} & \frac{m\bar{M}}{f} & -\frac{mC_\phi}{f} + \mathrm{dt}  \end{bmatrix},\\
    B(v_t) & = \begin{bmatrix} 0 & \frac{m_sh_s}{fv_t} & \frac{I_x}{fv_t} \\ \frac{1}{I_z}& 0 & 0\\ 0 & 0& 0\\ 0 & \frac{m}{f} & -\frac{m_sh_s}{f}\end{bmatrix} \mathrm{dt},~
    E(v_t) = \begin{bmatrix} \frac{I_xC_\textnormal{front}}{fv_t}\\\frac{C_\textnormal{front}l_\textnormal{front}}{I_z}\\0\\-\frac{m_sh_sC_\textnormal{front}}{f} \end{bmatrix} \mathrm{dt},
\end{align*}
where $v_t$ is the longitudinal velocity, $\bar{C} = C_\textnormal{front} + C_\textnormal{rear}$, $\bar{C}_1 = l_\textnormal{rear} C_\textnormal{rear} - l_\textnormal{front} C_\textnormal{front}$, $\bar{C}_2 = C_\textnormal{front} l_\textnormal{front}^2 + C_\textnormal{rear} l_\textnormal{rear}^2$, $\bar{M} = m_s g h_s -k_\phi$ and $f = mI_x-m_s^2h_s^2$. In the above $m$ is the total mass, $h_s$ is the length of roll moment arm, $m_s$ is the sprung mass, $I_x$ is the roll moment of inertia, $I_z$ is the yaw moment of inertia, $C_\phi$ is the roll damping coefficient, $k_\phi$ is the roll stiffness, $g$ is the acceleration due to gravity, $l_\textnormal{front}$ and $l_\textnormal{rear}$ are distances from center of gravity to front and rear axles respectively, and $C_\textnormal{front}$ and $C_\textnormal{rear}$ are the cornering stiffnesses of the front and rear tires respectively, of the vehicle.

\end{appendix}

\bibliographystyle{IEEEtran}   
\bibliography{library_GXZ}

\begin{thebibliography}{10}
\providecommand{\url}[1]{#1}
\csname url@samestyle\endcsname
\providecommand{\newblock}{\relax}
\providecommand{\bibinfo}[2]{#2}
\providecommand{\BIBentrySTDinterwordspacing}{\spaceskip=0pt\relax}
\providecommand{\BIBentryALTinterwordstretchfactor}{4}
\providecommand{\BIBentryALTinterwordspacing}{\spaceskip=\fontdimen2\font plus
\BIBentryALTinterwordstretchfactor\fontdimen3\font minus
  \fontdimen4\font\relax}
\providecommand{\BIBforeignlanguage}[2]{{%
\expandafter\ifx\csname l@#1\endcsname\relax
\typeout{** WARNING: IEEEtran.bst: No hyphenation pattern has been}%
\typeout{** loaded for the language `#1'. Using the pattern for}%
\typeout{** the default language instead.}%
\else
\language=\csname l@#1\endcsname
\fi
#2}}
\providecommand{\BIBdecl}{\relax}
\BIBdecl

\bibitem{Oldewurtel:energyBuildings:2012}
F.~Oldewurtel, A.~Parisio, C.~Jones, D.~Gyalistras, M.~Gwerder, V.~Stauch,
  B.~Lehmann, and M.~Morari, ``Use of model predictive control and weather
  forecasts for energy efficient building climate control,'' \emph{Energy and
  Buildings}, vol.~45, pp. 15--27, 2012.

\bibitem{MaMatuskoBorrelli2014}
Y.~Ma, J.~Matusko, and F.~Borrelli, ``{Stochastic Model Predictive Control for
  Building HVAC Systems: Complexity and Conservatism},'' \emph{IEEE
  Transactions on Control Systems Technology}, vol.~23, no.~1, pp. 101--116,
  2014.

\bibitem{zhang:schildbach:sturzenegger:morari:13}
X.~Zhang, G.~Schildbach, D.~Sturzenegger, and M.~Morari, ``{Scenario-based MPC
  for Energy-Efficient Building Climate Control under Weather and Occupancy
  Uncertainty},'' in \emph{European Control Conference}, Jul 2013, pp.
  1029--1034.

\bibitem{BouffardAswani2012}
P.~Bouffard, A.~Aswani, and C.~Tomlin, ``Learning-based model predictive
  control on a quadrotor: Onboard implementation and experimental results,'' in
  \emph{2012 IEEE International Conference on Robotics and Automation}, May
  2012, pp. 279--284.

\bibitem{Mueller2013_QuadcMPC}
M.~W. {Mueller} and R.~{D'Andrea}, ``A model predictive controller for
  quadrocopter state interception,'' in \emph{2013 European Control Conference
  (ECC)}, July 2013, pp. 1383--1389.

\bibitem{DAndrea_2016ICRA}
M.~{Hofer}, M.~{Muehlebach}, and R.~{D'Andrea}, ``Application of an approximate
  model predictive control scheme on an unmanned aerial vehicle,'' in
  \emph{2016 IEEE International Conference on Robotics and Automation (ICRA)},
  May 2016, pp. 2952--2957.

\bibitem{GrauGaoLinHedrickBorrelli2013}
A.~Gray, Y.~Gao, T.~Lin, J.~K. Hedrick, and F.~Borrelli, ``{Stochastic
  Predictive Control for Semi-Autonomous Vehicles with an Uncertain Driver
  Model},'' in \emph{IEEE Conference on Intelligent Transportation Systems},
  Oct 2013, pp. 2329--2334.

\bibitem{rosolia2017autonomousrace}
U.~Rosolia, A.~Carvalho, and F.~Borrelli, ``Autonomous racing using learning
  model predictive control,'' in \emph{American Control Conference (ACC),
  2017}, 2017.

\bibitem{bujarbaruahtorque}
M.~Bujarbaruah, Z.~Ercan, V.~Ivanovic, H.~E. Tseng, and F.~Borrelli, ``Torque
  based lane change assistance with active front steering,'' in \emph{IEEE
  Intelligent Transportation Systems (ITSC)}, 2017, pp. 1--6.

\bibitem{ZhangParking_CDC2018}
X.~{Zhang}, A.~{Liniger}, A.~{Sakai}, and F.~{Borrelli}, ``Autonomous parking
  using optimization-based collision avoidance,'' in \emph{2018 IEEE Conference
  on Decision and Control (CDC)}, Dec 2018, pp. 4327--4332.

\bibitem{PoignetGautier2000NMPCManipulator}
P.~{Poignet} and M.~{Gautier}, ``Nonlinear model predictive control of a robot
  manipulator,'' in \emph{6th International Workshop on Advanced Motion
  Control. Proceedings (Cat. No.00TH8494)}, March 2000, pp. 401--406.

\bibitem{Wieber2006LMPCWalking}
P.~{Wieber}, ``Trajectory free linear model predictive control for stable
  walking in the presence of strong perturbations,'' in \emph{2006 6th IEEE-RAS
  International Conference on Humanoid Robots}, Dec 2006, pp. 137--142.

\bibitem{LevineAbbeel2016learning}
T.~Zhang, G.~Kahn, S.~Levine, and P.~Abbeel, ``Learning deep control policies
  for autonomous aerial vehicles with mpc-guided policy search,'' in \emph{IEEE
  International Conference on Robotics and Automation (ICRA)}, 2016, pp.
  528--535.

\bibitem{2018arXiv180604335B}
M.~{Bujarbaruah}, X.~{Zhang}, H.~{Tseng}, and F.~{Borrelli}, ``{Adaptive MPC
  for Autonomous Lane Keeping},'' in \emph{14th International Symposium on
  Advanced Vehicle Control (AVEC)}, July 2018.

\bibitem{RaoWrightRawlings1998_IPM_MPC}
C.~V. Rao, S.~J. Wright, and J.~B. Rawlings, ``Application of interior-point
  methods to model predictive control,'' \emph{Journal of Optimization Theory
  and Applications}, vol.~99, no.~3, pp. 723--757, Dec 1998.

\bibitem{richter2009real}
S.~Richter, C.~N. Jones, and M.~Morari, ``Real-time input-constrained mpc using
  fast gradient methods,'' in \emph{Proceedings of the 48h IEEE Conference on
  Decision and Control (CDC) held jointly with 2009 28th Chinese Control
  Conference}.\hskip 1em plus 0.5em minus 0.4em\relax IEEE, 2009, pp.
  7387--7393.

\bibitem{MattingleyBoydCVXGen2012}
J.~Mattingley and S.~Boyd, ``{CVXGEN: a code generator for embedded convex
  optimization},'' \emph{Optimization and Engineering}, vol.~13, no.~1, pp.
  1--27, 2012.

\bibitem{FerreauQPOases2014}
H.~Ferreau, C.~Kirches, A.~Potschka, H.~Bock, and M.~Diehl, ``{qpOASES}: A
  parametric active-set algorithm for quadratic programming,''
  \emph{Mathematical Programming Computation}, vol.~6, no.~4, pp. 327--363,
  2014.

\bibitem{osqp}
B.~Stellato, G.~Banjac, P.~Goulart, A.~Bemporad, and S.~Boyd, ``{OSQP}: An
  operator splitting solver for quadratic programs,'' \emph{ArXiv e-prints},
  Nov. 2017.

\bibitem{BorrelliBemporadMorari_book}
F.~Borrelli, A.~Bemporad, and M.~Morari, \emph{{Predictive Control for linear
  and hybrid systems}}.\hskip 1em plus 0.5em minus 0.4em\relax Cambridge
  University Press, 2017.

\bibitem{KvasnicaFikarTAC2012}
M.~Kvasnica and M.~Fikar, ``Clipping-based complexity reduction in explicit
  mpc,'' \emph{IEEE Transactions on Automatic Control}, vol.~57, no.~7, pp.
  1878--1883, July 2012.

\bibitem{JohansenGrancharovaTAC2003}
T.~A. Johansen and A.~Grancharova, ``Approximate explicit constrained linear
  model predictive control via orthogonal search tree,'' \emph{IEEE
  Transactions on Automatic Control}, vol.~48, no.~5, pp. 810--815, May 2003.

\bibitem{multiresSeanSummers}
S.~Summers, C.~N. Jones, J.~Lygeros, and M.~Morari, ``A multiresolution
  approximation method for fast explicit model predictive control,'' \emph{IEEE
  Transactions on Automatic Control}, vol.~56, no.~11, pp. 2530--2541, Nov
  2011.

\bibitem{JonesMorariPolytopicApproximationTAC2010}
C.~N. Jones and M.~Morari, ``Polytopic approximation of explicit model
  predictive controllers,'' \emph{IEEE Transactions on Automatic Control},
  vol.~55, no.~11, pp. 2542--2553, Nov 2010.

\bibitem{KvasnicaFikarClippingTAC2012}
M.~Kvasnica and M.~Fikar, ``Clipping-based complexity reduction in explicit
  mpc,'' \emph{IEEE Transactions on Automatic Control}, vol.~57, no.~7, pp.
  1878--1883, July 2012.

\bibitem{KvasnicaAutomatica2013}
M.~Kvasnica, J.~HledÃ­k, I.~RauovÃ¡, and M.~Fikar, ``Complexity reduction
  of explicit model predictive control via separation,'' \emph{Automatica},
  vol.~49, no.~6, pp. 1776 -- 1781, 2013.

\bibitem{ParisiniNNAutomatica1995}
T.~Parisini and R.~Zoppoli, ``A receding-horizon regulator for nonlinear
  systems and a neural approximation,'' \emph{Automatica}, vol.~31, no.~10, pp.
  1443 -- 1451, 1995.

\bibitem{ChenMorariACC2018}
S.~Chen, K.~Saulnier, N.~Atanasov, D.~D. Lee, V.~Kumar, G.~J. Pappas, and
  M.~Morari, ``Approximating explicit model predictive control using
  constrained neural networks,'' in \emph{2018 Annual American Control
  Conference (ACC)}, June 2018, pp. 1520--1527.

\bibitem{chen2019large}
S.~W. Chen, T.~Wang, N.~Atanasov, V.~Kumar, and M.~Morari, ``Large scale model
  predictive control with neural networks and primal active sets,'' \emph{arXiv
  preprint arXiv:1910.10835}, Oct. 2019.

\bibitem{DomahidiLearning2011}
A.~Domahidi, M.~N. Zeilinger, M.~Morari, and C.~N. Jones, ``Learning a feasible
  and stabilizing explicit model predictive control law by robust
  optimization,'' in \emph{2011 50th IEEE Conference on Decision and Control
  and European Control Conference}, Dec 2011, pp. 513--519.

\bibitem{ultraFastBemporad}
A.~Bemporad, A.~Oliveri, T.~Poggi, and M.~Storace, ``Ultra-fast stabilizing
  model predictive control via canonical piecewise affine approximations,''
  \emph{IEEE Transactions on Automatic Control}, vol.~56, no.~12, pp.
  2883--2897, Dec 2011.

\bibitem{LuciaEfficientRepresentationArXiv2018}
B.~{Karg} and S.~{Lucia}, ``{Efficient representation and approximation of
  model predictive control laws via deep learning},'' \emph{arXiv preprint
  arXiv:1806.10644}, Jun. 2018.

\bibitem{HertneckAllgowerLCSS2018}
M.~Hertneck, J.~Köhler, S.~Trimpe, and F.~Allgöwer, ``Learning an approximate
  model predictive controller with guarantees,'' \emph{IEEE Control Systems
  Letters}, vol.~2, no.~3, pp. 543--548, July 2018.

\bibitem{Lucia_NMPC2018_DNNrobNMPC}
S.~Lucia and B.~Karg, ``A deep learning-based approach to robust nonlinear
  model predictive control,'' \emph{IFAC-PapersOnLine}, vol.~51, no.~20, pp.
  511 -- 516, 2018, 6th IFAC Conference on Nonlinear Model Predictive Control
  NMPC 2018.

\bibitem{Goulart2006}
P.~Goulart, E.~Kerrigan, and J.~Maciejowski, ``Optimization over state feedback
  policies for robust control with constraints,'' \emph{Automatica}, vol.~42,
  no.~4, pp. 523--533, 2006.

\bibitem{BemporadMorari1999}
A.~Bemporad and M.~Morari, ``{Robust Model Predictive Control: A Survey},'' in
  \emph{Robustness in identification and control}, ser. Lecture Notes in
  Control and Information Sciences.\hskip 1em plus 0.5em minus 0.4em\relax
  Springer London, 1999, vol. 245, pp. 207--226.

\bibitem{MPT3}
M.~Herceg, M.~Kvasnica, C.~Jones, and M.~Morari, ``{Multi-Parametric Toolbox
  3.0},'' in \emph{Proc.~of the European Control Conference}, Z\"urich,
  Switzerland, July 17--19 2013, pp. 502--510.

\bibitem{Smola2004}
A.~J. Smola and B.~Sch{\"o}lkopf, ``A tutorial on support vector regression,''
  \emph{Statistics and Computing}, vol.~14, no.~3, pp. 199--222, Aug 2004.

\bibitem{Genton:2002}
M.~G. Genton, ``Classes of kernels for machine learning: A statistics
  perspective,'' \emph{J. Mach. Learn. Res.}, vol.~2, pp. 299--312, Mar. 2002.

\bibitem{cybenko1989approximation}
G.~Cybenko, ``Approximation by superpositions of a sigmoidal function,''
  \emph{Mathematics of control, signals and systems}, vol.~2, no.~4, pp.
  303--314, 1989.

\bibitem{lu2017expressive}
Z.~Lu, H.~Pu, F.~Wang, Z.~Hu, and L.~Wang, ``The expressive power of neural
  networks: A view from the width,'' in \emph{Advances in Neural Information
  Processing Systems}, 2017, pp. 6231--6239.

\bibitem{boyd2004convex}
S.~Boyd and L.~Vandenberghe, \emph{{Convex Optimization}}.\hskip 1em plus 0.5em
  minus 0.4em\relax Cambridge University Press, 2004.

\bibitem{nemirovski:shapiro:06}
A.~Nemirovski and A.~Shapiro, ``{Scenario Approximations of Chance
  Constraints},'' in \emph{Probabilistic and randomized methods for design
  under uncertainty}.\hskip 1em plus 0.5em minus 0.4em\relax Springer, 2006,
  pp. 3--48.

\bibitem{Nemirovski2006}
------, ``Convex approximations of chance constrained programs,'' \emph{SIAM
  Journal on Optimization}, vol.~17, no.~4, pp. 969--996, 2006.

\bibitem{calafiore:campi:06}
G.~Calafiore and M.~C. Campi, ``{The Scenario Approach to Robust Control
  Design},'' \emph{IEEE Transactions on Automatic Control}, vol.~51, no.~5, pp.
  742--753, 2006.

\bibitem{campi:garatti:08}
M.~Campi and S.~Garatti, ``{The Exact Feasibility of Randomized Solutions of
  Robust Convex Programs},'' \emph{SIAM Journal on Optimization}, vol.~19,
  no.~3, pp. 1211--1230, 2008.

\bibitem{calafiore:10}
G.~C. Calafiore, ``{Random Convex Programs},'' \emph{SIAM Journal on
  Optimization}, vol.~20, no.~6, pp. 3427--3464, 2010.

\bibitem{calafiore:fagiano:12}
G.~Calafiore and L.~Fagiano, ``{Robust MPC via scenario optimization},''
  \emph{IEEE Trans. on Automatic Control}, vol.~58, pp. 219--224, 2012.

\bibitem{schildbach:fagiano:morari:13}
G.~Schildbach, L.~Fagiano, and M.~Morari, ``{Randomized Solutions to Convex
  Programs with Multiple Chance Constraints},'' \emph{SIAM Journal on
  Optimization}, vol.~23, no.~4, pp. 2479--2501, 2013.

\bibitem{ZhangAutomatica2015}
X.~Zhang, S.~Grammatico, G.~Schildbach, P.~Goulart, and J.~Lygeros, ``On the
  sample size of random convex programs with structured dependence on the
  uncertainty,'' \emph{Automatica}, vol.~60, pp. 182 -- 188, 2015.

\bibitem{grammatico:zhang:margellos:goulart:lygeros:16}
S.~Grammatico, X.~Zhang, K.~Margellos, P.~Goulart, and J.~Lygeros, ``A scenario
  approach for non-convex control design,'' \emph{IEEE Transactions on
  Automatic Control}, vol.~61, no.~2, pp. 334--345, 2016.

\bibitem{Nasir_Care_Weyer_TCST2018}
H.~A. {Nasir}, A.~{Carè}, and E.~{Weyer}, ``A scenario-based stochastic mpc
  approach for problems with normal and rare operations with an application to
  rivers,'' \emph{IEEE Transactions on Control Systems Technology}, pp. 1--14,
  2018.

\bibitem{anthony:biggs}
M.~Anthony and N.~Biggs, \emph{{Computational Learning Theory}}.\hskip 1em plus
  0.5em minus 0.4em\relax Cambridge Tracts in Theoretical Computer Science,
  1992.

\bibitem{vidyasagar:97}
M.~Vidyasagar, \emph{{A Theory of Learning and Generalization: With
  Applications to Neural Networks and Control Systems}}.\hskip 1em plus 0.5em
  minus 0.4em\relax Springer-Verlag, 1997.

\bibitem{defarias:vanroy:04}
D.~de~Farias and B.~V. Roy, ``{On Constraint Sampling in the Linear Programming
  Approach to Approximate Dynamic Programming},'' \emph{Mathematics of
  Operations Research}, vol.~29, no.~3, pp. 462--478, 2004.

\bibitem{erdogan:iyengar:06}
E.~Erdogan and G.~Iyengar, ``Ambiguous chance constrained problems and robust
  optimization,'' \emph{Mathematical Programming}, vol. 107, pp. 37--61, 2006.

\bibitem{tempo:calafiore:dabbene}
R.~Tempo, G.~Calafiore, and F.~Dabbene, \emph{{Randomized Algorithms for
  Analysis and Control of Uncertain Systems}}.\hskip 1em plus 0.5em minus
  0.4em\relax Springer, 2013.

\bibitem{alamo:tempo:luque:ramirez:Automatica15}
T.~Alamo, R.~Tempo, A.~Luque, and D.~Ramirez, ``{Randomized methods for design
  of uncertain systems: Sample complexity and sequential algorithms },''
  \emph{Automatica}, vol.~52, pp. 160--172, Feb. 2015.

\bibitem{Zhang_PhD2016}
X.~Zhang, ``{Robust and Stochastic Control of Uncertain Systems: From Scenario
  Optimization to Adjustable Uncertainty Sets},'' Ph.D. dissertation, ETH
  Zurich, 2016.

\bibitem{vapnik}
V.~Vapnik, \emph{{Statistical Learning Theory}}.\hskip 1em plus 0.5em minus
  0.4em\relax Wiley Interscience, 1989.

\bibitem{vidyasagar:01}
M.~Vidyasagar, ``Randomized algorithms for robust controller synthesis using
  statistical learning theory,'' \emph{Automatica}, vol.~37, pp. 1515--1528,
  2001.

\bibitem{Goodfellow-et-al-2016}
I.~Goodfellow, Y.~Bengio, and A.~Courville, \emph{Deep Learning}.\hskip 1em
  plus 0.5em minus 0.4em\relax MIT Press, 2016.

\bibitem{bartlett2019nearly}
P.~L. Bartlett, N.~Harvey, C.~Liaw, and A.~Mehrabian, ``Nearly-tight
  vc-dimension and pseudodimension bounds for piecewise linear neural
  networks.'' \emph{Journal of Machine Learning Research}, vol.~20, no.~63, pp.
  1--17, 2019.

\bibitem{bonferroni1936teoria}
C.~Bonferroni, ``Teoria statistica delle classi e calcolo delle probabilita,''
  \emph{Pubblicazioni del R Istituto Superiore di Scienze Economiche e
  Commericiali di Firenze}, vol.~8, pp. 3--62, 1936.

\bibitem{zhang2019safe}
X.~{Zhang}, M.~{Bujarbaruah}, and F.~{Borrelli}, ``Safe and near-optimal policy
  learning for model predictive control using primal-dual neural networks,'' in
  \emph{2019 American Control Conference (ACC)}, July 2019, pp. 354--359.

\bibitem{ChandrasekanICC2011}
C.~Rengaraj and D.~Crolla, ``Integrated chassis control to improve vehicle
  handling dynamics performance,'' in \emph{SAE 2011 World Congress \&
  Exhibition}.\hskip 1em plus 0.5em minus 0.4em\relax SAE International, apr
  2011.

\bibitem{mosek}
Mosek, ``Mosek optimization software, \texttt{\url{http://www. mosek. com}}.''

\bibitem{optimization2012gurobi}
G.~Optimization, ``Gurobi optimizer reference manual,''
  \emph{\url{http://www.gurobi.com}}, 2016.

\bibitem{lofberg:05}
J.~L{\"o}fberg, ``{YALMIP: A toolbox for modeling and optimization in
  MATLAB},'' in \emph{IEEE International Symposium on Computer Aided Control
  Systems Design}, 2004, pp. 284--�289.

\bibitem{TAKANO2003149}
S.~Takano, M.~Nagai, T.~Taniguchi, and T.~Hatano, ``Study on a vehicle dynamics
  model for improving roll stability,'' \emph{JSAE Review}, vol.~24, no.~2, pp.
  149 -- 156, 2003.

\end{thebibliography}

\end{document}